\documentclass[11pt]{amsart}

\usepackage[a4paper,top=2cm,bottom=2cm,left=2cm,right=2cm]{geometry}

\usepackage[T1]{fontenc}
\usepackage{amsmath,amssymb,amsthm}
\usepackage{graphicx}
\usepackage{subcaption}
\usepackage{float}
\usepackage{hyperref}
\usepackage{placeins}
\usepackage{amsmath}
\usepackage{float}

\newcommand{\sech}{\mathop{\rm sech}\nolimits}

\DeclareMathOperator{\Ran}{Ran}
\newtheorem{lemma}{Lemma}[section]

\newtheorem{theorem}{Theorem}[section]
\newtheorem{proposition}{Proposition}[section]

\newtheorem{definition}{Definition}[section]
\newtheorem{remark}{Remark}[section]

\theoremstyle{definition}

\numberwithin{equation}{section}

\title[Traveling waves in a regularized sine--Gordon equation]
{Spectral stability and slow--fast structure of traveling waves in a regularized sine--Gordon equation}

\author{Vassilios M Rothos}
\address{School of Mechanical Engineering, Faculty of Engineering\\
Aristotle University of Thessaloniki,\\
Thessaloniki 54124, Greece}
\email{rothos@auth.gr}

\thanks{Accepted for publication in Wave Motion. 
The research was funded by Aristotle University of Thessaloniki (AUTH) Research Council grants number 73191, 73699, 11682.}

\keywords{Sine--Gordon equation; traveling waves; spectral stability; Evans function; slow–fast systems; wave propagation}
\date{\today}

\dedicatory{}

\begin{document}
\begin{abstract}
We investigate the dynamics and spectral stability of traveling kink and antikink solutions in a dissipative sine--Gordon equation with two distinct fourth--order regularization mechanisms: a mixed space--time (inertial) term and a purely spatial (elliptic) term. The model includes damping, bias forcing, and higher--order dissipative effects, and is motivated by refined descriptions of fluxon dynamics in long Josephson junctions. Using a collective--coordinate reduction, we derive a Melnikov--type condition for speed selection, yielding explicit predictions for asymptotic propagation speeds, which are validated by direct numerical simulations of the full partial differential equation. Spectral stability is analyzed using Evans function techniques adapted to the singular slow--fast structure induced by the regularization. By formulating the linearized problem on a consistent--splitting domain, we show that no additional point spectrum bifurcates from the origin. Near the edges of the essential spectrum, a square--root transformation is used to resolve branch singularities and establish analyticity in a lifted spectral variable. Numerical Evans function computations near $\lambda=0$ and near the essential spectrum edges confirm the analytical results, indicating absence of unstable eigenvalues for both kink and antikink solutions.

\end{abstract}
\maketitle
\section{Introduction}
\label{sec1}

The sine--Gordon equation is a fundamental model for nonlinear wave propagation
in a wide range of physical systems, most notably in the theory of long Josephson
junctions, where traveling wave solutions represent propagating magnetic flux
quanta (fluxons); see, for example,
\cite{BaronePaterno,Scott1978,Trullinger}.
In idealized settings, the classical sine--Gordon equation
admits an integrable Hamiltonian structure and exact traveling kink solutions.
In realistic physical regimes, however, dissipation, external forcing, and
higher--order regularization effects are unavoidable and can substantially
influence both the dynamics and stability of such waves
\cite{KivsharMalomed,McLaughlinScott}.

In this work we consider a dissipative and regularized sine--Gordon equation of
the form
\begin{equation}
u_{tt}-u_{xx}+\sin u
=
\varepsilon\Big(
\gamma-\alpha u_t+\beta u_{xxt}
+\delta u_{xxtt}-d u_{xxxx}
\Big),
\qquad 0<\varepsilon\ll1,
\label{eq:SG_general_intr}
\end{equation}
where $\alpha>0$ denotes linear damping, $\gamma\in\mathbb{R}$ is a constant bias,
and $\beta\ge0$ accounts for higher--order diffusive dissipation.
The parameters $\delta\ge0$ and $d\ge0$ represent two distinct fourth--order
regularization mechanisms: a mixed space--time (inertial) correction
$\delta u_{xxtt}$ and a purely spatial (elliptic) correction $-d u_{xxxx}$. Models of this type arise in refined descriptions of long Josephson junctions and
related superconducting devices; see, for example,
\cite{BaronePaterno,Scott1978,McLaughlinScott,KivsharMalomed,DDvGV}.

Regularized sine--Gordon models of this type arise naturally in refined circuit
descriptions of extended Josephson junctions and related superconducting devices
\cite{BaronePaterno,Goldobin}, where idealized second--order models fail to
capture inertial, capacitive, and electromagnetic relaxation effects at small
spatial and temporal scales.
From a modeling perspective, it is essential to distinguish between the roles of
the two regularization mechanisms present in \eqref{eq:SG_general_intr}.
The elliptic term $-d u_{xxxx}$ introduces spatial smoothing and modifies the
high--frequency behavior of the spectrum in a uniformly regular manner.
In contrast, the mixed space--time term $\delta u_{xxtt}$ constitutes a
\emph{singular} perturbation of the underlying wave equation, introducing
additional inertial modes and a pronounced slow--fast structure in the associated
spatial dynamics \cite{Fenichel,Jones1995}.

From a spectral point of view, these two regularization mechanisms have
fundamentally different consequences.
While the $d u_{xxxx}$ term leads to a higher--order but uniformly elliptic spatial
operator, the $\delta u_{xxtt}$ term increases the order of the spatial eigenvalue
problem in a singular fashion and alters the dispersion relation in a way that can
generate additional fast spatial modes.
A priori, the combined presence of these terms could modify the essential
spectrum, generate new point spectrum, or destabilize the neutral translational
eigenvalue at $\lambda=0$ associated with translation invariance.
Determining whether and how such effects occur is a central mathematical and
physical question in the spectral theory of nonlinear waves
\cite{Sandstede2002,PegoWeinstein,ZumbrunHoward}.

The purpose of this work is to provide a combined analytical and numerical
investigation of the dynamics and spectral stability of traveling kink and
antikink solutions of \eqref{eq:SG_general_intr}.
Our analysis addresses two complementary aspects.
First, we study the long--time dynamics of traveling waves using a
collective--coordinate reduction, leading to a Melnikov--type condition for speed
selection \cite{McLaughlinScott,Kapitula1999}.
The resulting Melnikov function predicts selected propagation speeds that depend
explicitly on the regularization parameters $\delta$ and $d$, reflecting the
distinct ways in which inertial and elliptic corrections influence the wave
dynamics.
Second, we analyze the spectral stability of the resulting traveling waves using
Evans function techniques adapted to the slow--fast structure induced by the
$\delta$--term \cite{AlexanderGardnerJones,KapitulaSandstede1998,KapitulaSandstede2002,SandstedeScheel}.

\subsection{Main results}

From a mathematical standpoint, the main contribution of this work is to show that,
despite the presence of two distinct fourth--order regularization mechanisms, the
local spectral stability properties of traveling waves remain governed by the
underlying slow dynamics.

More precisely, by formulating the linearized spectral problem as a first--order
Evans system on a consistent--splitting domain and exploiting the separation
between slow and fast spatial modes, we establish that neither the inertial
regularization controlled by $\delta$ nor the elliptic regularization controlled
by $d$ generates additional point spectrum near $\lambda=0$ or at the edges of the
essential spectrum.

In particular:
\begin{itemize}
\item the traveling kink and antikink solutions persist as smooth perturbations of
the classical sine--Gordon waves, with their speed selected by a Melnikov--type
condition depending explicitly on $\delta$ and $d$;

\item the spatial dynamics admit a uniform slow--fast decomposition, consistent
with geometric singular perturbation theory \cite{Fenichel,Jones1995}, with fast
modes remaining uniformly hyperbolic and decoupled from the spectral stability
problem;

\item the Evans function admits a corresponding factorization into slow and fast
components, in the sense of modern Evans function theory
\cite{AlexanderGardnerJones,KapitulaSandstede2002,SandstedeScheel}, where the fast
factor is analytic and nonvanishing in admissible spectral domains;

\item as a consequence, no additional eigenvalues bifurcate from $\lambda=0$ or
from the edges of the essential spectrum, and spectral stability is determined
entirely by the reduced slow dynamics.
\end{itemize}

Complementing the analytical results, we perform numerical simulations of the full
partial differential equation to validate the Melnikov speed selection mechanism
and to compute the Evans function near $\lambda=0$ and near the edges of the
essential spectrum. These computations confirm that variations in $\delta$ and $d$
lead to smooth and controlled changes in the selected wave speed, while preserving
spectral stability for sufficiently small $\varepsilon$.

\subsection{Organization of the paper}

The paper is organized as follows.
In Section~\ref{sec2} we derive the traveling wave equations and formulate the associated
spectral problem.
Sections~\ref{sec_Evans0},\ref{sec:Evans} introduce the Evans function framework and establish the slow--fast
decomposition of the spatial dynamics.
Sections~\ref{sec:local_spectral_structure}-\ref{sec:singular_limits} analyze the point spectrum near $\lambda=0$, while Section~\ref{sec:evans_near_0_and_edges_beta_delta_d} is devoted
to the structure of the essential spectrum and its edges.
Section~\ref{sec7} presents numerical simulations of kink and antikink dynamics together
with Evans function computations.
We conclude with a discussion of the implications of our results and directions
for future work.
\section{Travelling wave reduction and Melnikov analysis}
\label{sec2}

\subsection{Travelling wave reduction}
\label{sec2a}

We consider the perturbed sine--Gordon equation \eqref{eq:SG_general_intr} and seek travelling waves of the form $u(x,t)=U(\xi)$ with $\xi=x-ct$, where $c$ is the
wave speed. Substituting into \eqref{eq:SG_general_intr} yields the fourth--order ODE
\begin{equation}
(1-c^2)U''+\sin U
=
\varepsilon\Big(
\gamma+\alpha c U'
-\beta c U'''
+(\delta c^2-d)\,U''''
\Big).
\label{eq:TW_general}
\end{equation}
The highest derivative is multiplied by the small factor $\varepsilon(\delta c^2-d)$,
so \eqref{eq:TW_general} is singularly perturbed in the regime
\begin{equation}
\varepsilon|\delta c^2-d|\ll1,
\qquad \text{with } \delta c^2-d\neq 0,
\label{eq:nondegeneracy}
\end{equation}
where the nondegeneracy condition excludes the exceptional resonance at which the
fourth--order correction disappears in the travelling--wave reduction.

Introduce the variables
\[
U_1=U,\qquad U_2=U',\qquad U_3=U'',\qquad U_4=U'''.
\]
Then \eqref{eq:TW_general} can be written as the four--dimensional system
\begin{equation}
\begin{aligned}
U_1'&=U_2,\\
U_2'&=U_3,\\
U_3'&=U_4,\\
(1-c^2)U_3+\sin U_1&=\varepsilon\Big(\gamma+\alpha c U_2-\beta c U_4+(\delta c^2-d)\,U_4'\Big).
\end{aligned}
\label{eq:TW_system_raw}
\end{equation}
To expose the slow--fast structure, we isolate the term containing the highest derivative:
\begin{equation}
(\delta c^2-d)\,\varepsilon\,U_4'
=
(1-c^2)U_3+\sin U_1
-\varepsilon\big(\gamma+\alpha c U_2-\beta c U_4\big).
\label{eq:U4prime_isolated}
\end{equation}
When $\varepsilon\to0$, the left--hand side vanishes and \eqref{eq:U4prime_isolated} yields the
algebraic constraint
\begin{equation}
(1-c^2)U_3+\sin U_1=0.
\label{eq:constraint}
\end{equation}
This relation defines the critical manifold of the fast--slow system.

Introduce the slow spatial variable $z=\varepsilon \xi$, so that
\[
\frac{d}{d\xi}=\varepsilon\frac{d}{dz}.
\]
Equivalently, $\xi$ may be regarded as the fast scale and $z$ as the slow scale.
Setting $\varepsilon=0$ in \eqref{eq:TW_system_raw} gives the layer problem
\begin{equation}
\begin{aligned}
U_1'&=U_2,\\
U_2'&=U_3,\\
U_3'&=U_4,\\
0&=(1-c^2)U_3+\sin U_1,
\end{aligned}
\label{eq:layer_problem}
\end{equation}
where prime denotes differentiation with respect to the fast variable $\xi$.
The algebraic equation in \eqref{eq:layer_problem} defines the critical manifold
\begin{equation}
\mathcal M_0
=
\Big\{(U_1,U_2,U_3,U_4)\in\mathbb R^4:\ (1-c^2)U_3+\sin U_1=0\Big\},
\label{eq:M0_def}
\end{equation}
which can be written as the graph
\begin{equation}
U_3=h_0(U_1):=-\frac{\sin U_1}{1-c^2},
\qquad |c|\neq 1.
\label{eq:h0}
\end{equation}

To determine normal hyperbolicity, we rewrite \eqref{eq:U4prime_isolated} as
\begin{equation}
U_4'
=
\frac{(1-c^2)U_3+\sin U_1}{(\delta c^2-d)\varepsilon}
-\frac{\gamma+\alpha c U_2-\beta c U_4}{\delta c^2-d}.
\label{eq:fast_rhs}
\end{equation}
The presence of the factor $((\delta c^2-d)\varepsilon)^{-1}$ shows that deviations from the
constraint \eqref{eq:constraint} relax on the fast scale, provided $\delta c^2-d\neq0$.
Linearization of \eqref{eq:fast_rhs} in the transverse direction to $\mathcal M_0$ yields a
nonzero fast eigenvalue proportional to $((\delta c^2-d)\varepsilon)^{-1}$, which remains uniformly
away from the imaginary axis for $\varepsilon$ sufficiently small under \eqref{eq:nondegeneracy}.
Therefore $\mathcal M_0$ is normally hyperbolic for $|c|<1$, and Fenichel theory implies that,
for $\varepsilon>0$ sufficiently small, there exists a locally invariant slow manifold
$\mathcal M_\varepsilon$ that is $O(\varepsilon)$--close to $\mathcal M_0$; see
\cite{Fenichel,Jones1995,GuckenheimerHolmes,Wiggins}.

We parameterize $\mathcal M_\varepsilon$ as a graph over $(U_1,U_2)$,
\begin{equation}
U_3=h(U_1,U_2;\varepsilon)=h_0(U_1)+\varepsilon h_1(U_1,U_2)+O(\varepsilon^2),
\label{eq:heps}
\end{equation}
with $h_0$ given by \eqref{eq:h0}.
To determine $h_1$, we substitute \eqref{eq:heps} into \eqref{eq:U4prime_isolated} and impose the
invariance condition. On $\mathcal M_\varepsilon$ we have
\[
U_3'=U_4=\frac{d}{d\xi}h(U_1,U_2;\varepsilon)
=h_{U_1}(U_1,U_2;\varepsilon)U_1'+h_{U_2}(U_1,U_2;\varepsilon)U_2'.
\]
Using $U_1'=U_2$ and $U_2'=U_3=h_0(U_1)+O(\varepsilon)$ at leading order gives
\[
U_4=h_0'(U_1)U_2+O(\varepsilon)
=-\frac{\cos U_1}{1-c^2}U_2+O(\varepsilon),
\]
and differentiating once more yields
\[
U_4'
=
-\frac{\cos U_1}{1-c^2}U_3
+\frac{\sin U_1}{1-c^2}U_2^2
+O(\varepsilon)
=
\frac{\sin U_1\cos U_1}{(1-c^2)^2}
+\frac{\sin U_1}{1-c^2}U_2^2
+O(\varepsilon),
\]
where we used $U_3=h_0(U_1)+O(\varepsilon)=-\sin U_1/(1-c^2)+O(\varepsilon)$.
Substituting these relations into \eqref{eq:U4prime_isolated} and collecting the $O(\varepsilon)$
terms yields
\begin{equation}
(1-c^2)h_1(U_1,U_2)
=
\gamma+\alpha c U_2
+\frac{\beta c\,U_2\cos U_1}{1-c^2}
+\frac{(\delta c^2-d)\,\sin U_1\cos U_1}{(1-c^2)^2}
-\frac{U_2^2\,\sin U_1}{1-c^2}.
\label{eq:h1}
\end{equation}
In particular, the fourth--order correction enters \eqref{eq:h1} only through the combination
$\delta c^2-d$, so it modifies the slow manifold without destroying normal hyperbolicity.

Restricting the flow of \eqref{eq:TW_system_raw} to $\mathcal M_\varepsilon$ gives the reduced
slow system
\begin{equation}
U_1'=U_2,\qquad
U_2'=-\frac{\sin U_1}{1-c^2}+\varepsilon h_1(U_1,U_2)+O(\varepsilon^2),
\label{eq:reduced_slow}
\end{equation}
which is a regular perturbation of the unperturbed travelling--wave equation.

\subsection{Melnikov Analysis and Speed Selection}
\label{sec2b}

For $\varepsilon=0$, \eqref{eq:reduced_slow} reduces to the Hamiltonian system
\[
U'=v,\qquad v'=-\frac{\sin U}{1-c^2},
\]
with Hamiltonian
\[
H(U,v)=\frac12(1-c^2)v^2+1-\cos U.
\]
For $|c|<1$ this system admits the classical heteroclinic kink
\begin{equation}
U_0(\xi)=4\arctan\!\left(e^{\xi/\sqrt{1-c^2}}\right),
\qquad
v_0(\xi)=U_0'(\xi)=\frac{2}{\sqrt{1-c^2}}\sech\!\left(\frac{\xi}{\sqrt{1-c^2}}\right),
\label{eq:kink}
\end{equation}
connecting $U=0$ to $U=2\pi$; see, for example,
\cite{McLaughlinScott,KivsharMalomed,Trullinger}.

On the slow manifold, the perturbed reduced system can be written in the Melnikov form
\begin{equation}
U'=v,\qquad
v'=-\frac{\sin U}{1-c^2}+\varepsilon G(U,v,v',v''),
\label{eq:melnikov_form}
\end{equation}
where, consistently with \eqref{eq:TW_general}, the perturbation term is
\begin{equation}
G(U,v,v',v'')
=
\frac{1}{1-c^2}\Big(\gamma+\alpha c v-\beta c v'+(\delta c^2-d)\,v''\Big).
\label{eq:G_def}
\end{equation}
The associated Melnikov function is
\begin{equation}
M(c)
=
\int_{-\infty}^{\infty}
\nabla H\big(U_0(\xi),v_0(\xi)\big)\cdot
\begin{pmatrix}
0\\
G\big(U_0,v_0,v_0',v_0''\big)
\end{pmatrix}
\,d\xi.
\label{eq:Melnikov_def}
\end{equation}
Since $\nabla H=(\sin U,(1-c^2)v)$, this becomes
\begin{equation}
M(c)
=
\int_{-\infty}^{\infty}
v_0(\xi)\Big(\gamma+\alpha c v_0-\beta c v_0'+(\delta c^2-d)\,v_0''\Big)\,d\xi.
\label{eq:Melnikov_explicit}
\end{equation}
All integrals converge absolutely since $v_0$ and its derivatives decay exponentially.
For general background on Melnikov-type reductions and persistence of heteroclinic
connections in perturbed Hamiltonian systems, see
\cite{KivsharMalomed,Kapitula1999,GuckenheimerHolmes,Wiggins}.

We now compute each contribution in \eqref{eq:Melnikov_explicit}.
First, the bias term satisfies
\begin{equation}
\int_{-\infty}^{\infty} v_0(\xi)\,d\xi
=
U_0(+\infty)-U_0(-\infty)
=
2\pi.
\label{eq:int_bias}
\end{equation}
Second, the damping term is
\begin{equation}
\int_{-\infty}^{\infty}v_0^2\,d\xi
=
\frac{8}{\sqrt{1-c^2}}.
\label{eq:int_damping}
\end{equation}
Third, for the diffusive term we integrate by parts:
\begin{equation}
\int_{-\infty}^{\infty}v_0\,v_0'\,d\xi
=
\frac12\int_{-\infty}^{\infty}(v_0^2)'\,d\xi
=0.
\label{eq:int_beta}
\end{equation}
Finally, for the fourth--order contribution we again integrate by parts:
\begin{equation}
\int_{-\infty}^{\infty}v_0\,v_0''\,d\xi
=
-\int_{-\infty}^{\infty}(v_0')^2\,d\xi
=
-\frac{8}{3(1-c^2)^{3/2}}.
\label{eq:int_fourth}
\end{equation}
Substituting \eqref{eq:int_bias}--\eqref{eq:int_fourth} into \eqref{eq:Melnikov_explicit} yields
\begin{equation}
M(c)
=
2\pi\gamma
+\frac{8\alpha c}{\sqrt{1-c^2}}
-\frac{8(\delta c^2-d)}{3(1-c^2)^{3/2}}.
\label{eq:Melnikov_final}
\end{equation}
The persistence of the heteroclinic travelling wave for $\varepsilon\neq0$ requires the solvability
condition
\begin{equation}
M(c)=0,
\label{eq:Melnikov_condition}
\end{equation}
which implicitly determines the selected wave speed $c$ as a function of the parameters.
The fourth--order regularization enters \eqref{eq:Melnikov_final} only through the combination
$\delta c^2-d$ and thus yields a quantitative correction to the classical selection law.
In particular, this correction becomes increasingly significant as $|c|\to1$, reflecting the
singular nature of the Lorentz factor $(1-c^2)^{-1/2}$.

\subsection{Distinction between the two regularization mechanisms}
\label{sec2c}

Equation \eqref{eq:Melnikov_final} provides a unified velocity selection condition for the two
regularization mechanisms considered in this work.

\emph{Case (a): mixed space--time regularization $\delta u_{xxtt}$.}
Setting $d=0$, the fourth--order contribution in \eqref{eq:TW_general} becomes $\delta c^2U''''$,
and the Melnikov correction takes the form
\[
-\frac{8\delta c^2}{3(1-c^2)^{3/2}}.
\]
Thus the inertial correction depends quadratically on $c$ and preserves the reflection symmetry
$c\mapsto -c$ at the level of the fourth--order contribution.

\emph{Case (b): elliptic spatial regularization $-d u_{xxxx}$.}
Setting $\delta=0$, the fourth--order contribution becomes $-d\,U''''$, and the Melnikov correction
is
\[
\frac{8d}{3(1-c^2)^{3/2}},
\]
corresponding to a purely spatial regularization mechanism. In this case the correction is again
independent of the sign of $c$ and acts through the same singular prefactor as $|c|\to1$.

In both cases, the travelling wave persists as a smooth deformation of the classical sine--Gordon
kink, with its wave speed selected by the balance condition \eqref{eq:Melnikov_condition}.
\section{Linearization, far--field hyperbolicity, and the Evans set--up}
\label{sec_Evans0}

We study spectral stability of the travelling wave
$u(x,t)=U_\varepsilon(\xi)$, $\xi=x-c_\varepsilon t$,
for the regularized sine--Gordon equation \eqref{eq:SG_general_intr},
where $\alpha>0$, $\beta\ge 0$, and $\delta,d\ge 0$.
We treat simultaneously the two fourth--order mechanisms:
the mixed space--time inertial term $\delta u_{xxtt}$ and the purely spatial elliptic term
$-d u_{xxxx}$.
Throughout we exclude the degenerate resonance in the travelling--wave reduction by assuming
\begin{equation}
\delta c_\varepsilon^{\,2}-d \neq 0.
\label{eq:nondegeneracy_S2}
\end{equation}
Our formulation follows the standard Evans--function approach to travelling--wave stability;
see, for example,
\cite{AlexanderGardnerJones,PegoWeinstein,Sandstede2002,BridgesDerks,ZumbrunHoward}.

Let $u(x,t)=U_\varepsilon(\xi)+v(\xi,t)$ with $\xi=x-c_\varepsilon t$.
Substituting into \eqref{eq:SG_general_intr} and retaining linear terms in $v$ yields
\begin{equation}
v_{tt}-v_{xx}+\cos\!\big(U_\varepsilon(\xi)\big)v
=
\varepsilon\Big(
-\alpha v_t+\beta v_{xxt}
+\delta v_{xxtt}-d v_{xxxx}
\Big).
\label{eq:linPDE_S2}
\end{equation}
Seeking normal modes $v(\xi,t)=e^{\lambda t}\hat v(\xi)$ gives the eigenvalue equation
\begin{equation}
\lambda^2\hat v -2c_\varepsilon\lambda \hat v' + (c_\varepsilon^{\,2}-1)\hat v''
+\cos(U_\varepsilon)\hat v
=
\varepsilon\Big(
-\alpha\lambda\hat v
+\beta\lambda \hat v''
+\delta\lambda^2 \hat v''
-d \hat v''''
\Big),
\label{eq:eig_raw_S2}
\end{equation}
where $'$ denotes $\frac{d}{d\xi}$.
Rearranging gives the fourth--order spectral ODE
\begin{equation}
-d\varepsilon \hat v''''
+\Bigl(1-c_\varepsilon^{\,2}-\varepsilon(\beta\lambda+\delta\lambda^2)\Bigr)\hat v''
-2c_\varepsilon\lambda \hat v'
+\Bigl(\lambda^2+\varepsilon\alpha\lambda+\cos(U_\varepsilon(\xi))\Bigr)\hat v
=0.
\label{eq:spectral4_S2}
\end{equation}

Introduce $y_1=\hat v$, $y_2=\hat v'$, $y_3=\hat v''$, $y_4=\hat v'''$ and
$Y=(y_1,y_2,y_3,y_4)^\top$.
Then \eqref{eq:spectral4_S2} is equivalent to
\begin{equation}
Y' = A(\xi,\lambda,\varepsilon)\,Y,
\label{eq:EvansSystem_S2}
\end{equation}
with companion structure
\begin{equation}
A(\xi,\lambda,\varepsilon)=
\begin{pmatrix}
0&1&0&0\\
0&0&1&0\\
0&0&0&1\\
a_{41}(\xi,\lambda,\varepsilon)&a_{42}(\lambda,\varepsilon)&a_{43}(\lambda,\varepsilon)&a_{44}(\lambda,\varepsilon)
\end{pmatrix},
\label{eq:Acomp_S2}
\end{equation}
where $a_{44}\equiv0$ and
\begin{align}
a_{41}(\xi,\lambda,\varepsilon)
&=
-\frac{\lambda^2+\varepsilon\alpha\lambda+\cos(U_\varepsilon(\xi))}{\varepsilon(\delta\lambda^2-d)},
\label{eq:a41_S2}\\
a_{42}(\lambda,\varepsilon)
&=
\frac{2c_\varepsilon\lambda}{\varepsilon(\delta\lambda^2-d)},
\label{eq:a42_S2}\\
a_{43}(\lambda,\varepsilon)
&=
-\frac{1-c_\varepsilon^{\,2}-\varepsilon(\beta\lambda+\delta\lambda^2)}{\varepsilon(\delta\lambda^2-d)}.
\label{eq:a43_S2}
\end{align}
The singular prefactor $[\varepsilon(\delta\lambda^2-d)]^{-1}$ highlights the singularly perturbed
spatial dynamics induced by the fourth--order regularization.
This first--order reformulation is the natural starting point for the Evans--function construction;
see \cite{AlexanderGardnerJones,BridgesDerks,Sandstede2002,SandstedeScheel}.

Since $U_\varepsilon(\xi)$ is a heteroclinic travelling wave, it converges exponentially to its end states
as $\xi\to\pm\infty$, hence
\begin{equation}
\cos(U_\varepsilon(\xi))\longrightarrow q_\pm:=\cos(U_\varepsilon(\pm\infty)),
\qquad \xi\to\pm\infty.
\label{eq:farfield_cos_S2}
\end{equation}
Therefore $A(\xi,\lambda,\varepsilon)$ converges exponentially to constant matrices
\begin{equation}
A(\xi,\lambda,\varepsilon)\longrightarrow A_\pm(\lambda,\varepsilon),
\qquad \xi\to\pm\infty,
\label{eq:AtoApm_S2}
\end{equation}
obtained from \eqref{eq:Acomp_S2} by replacing $\cos(U_\varepsilon(\xi))$ with $q_\pm$ in \eqref{eq:a41_S2}.
This far--field convergence underlies both the characterization of the essential spectrum
and the construction of the Evans function.

Seeking far--field solutions of the form $\hat v(\xi)\sim e^{\mu\xi}$ in \eqref{eq:spectral4_S2}
yields the far--field characteristic polynomial
\begin{equation}
p_4(\mu,\lambda,\varepsilon)
=
-d\varepsilon \mu^4
+\Bigl(1-c_\varepsilon^{\,2}-\varepsilon(\beta\lambda+\delta\lambda^2)\Bigr)\mu^2
-2c_\varepsilon\lambda\,\mu
+\lambda^2+\varepsilon\alpha\lambda+q_\pm,
\label{eq:p4_S2}
\end{equation}
equivalently
\[
p_4(\mu,\lambda,\varepsilon)=\det(\mu I-A_\pm(\lambda,\varepsilon)).
\]

The essential spectrum is characterized by loss of hyperbolicity of the far--field systems,
that is, by the existence of purely imaginary spatial eigenvalues $\mu=ik$, $k\in\mathbb R$:
\begin{equation}
\lambda\in\sigma_{\mathrm{ess}}(\varepsilon)
\quad\Longleftrightarrow\quad
p_4(ik,\lambda,\varepsilon)=0 \ \text{for some } k\in\mathbb R.
\label{eq:ess_def_S2}
\end{equation}
Condition \eqref{eq:ess_def_S2} defines, in general, dispersion curves
$\lambda=\lambda_j(k)$ in the complex plane.
For our purposes we do not require an explicit parametrization of these curves; what matters is the
existence of an admissible domain $\Omega$ that avoids them and on which the far--field matrices are hyperbolic.
For general background on essential spectrum and consistent splitting in Evans theory.

Fix an open simply connected set $\Omega\subset\mathbb C$ such that
\begin{equation}
p_4(ik,\lambda,\varepsilon)\neq 0
\qquad
\text{for all } k\in\mathbb R \text{ and all } \lambda\in\Omega,
\label{eq:Omega_admissible_S2}
\end{equation}
equivalently $\Omega\cap\sigma_{\mathrm{ess}}(\varepsilon)=\emptyset$.
By \eqref{eq:Omega_admissible_S2}, the far--field matrices $A_\pm(\lambda,\varepsilon)$ are hyperbolic for all
$\lambda\in\Omega$ and depend analytically on $\lambda$.
Moreover, for $\varepsilon$ sufficiently small and $\lambda$ ranging over compact subsets of $\Omega$,
hyperbolicity implies a consistent splitting:
the spectrum of $A_\pm(\lambda,\varepsilon)$ contains exactly two spatial eigenvalues with $\Re\mu>0$
and two with $\Re\mu<0$.
Hence \eqref{eq:EvansSystem_S2} admits exponential dichotomies on $\mathbb R_+$ and $\mathbb R_-$,
with
\begin{equation}
\dim E_-^{u}(\lambda,\varepsilon)=2,
\qquad
\dim E_+^{s}(\lambda,\varepsilon)=2,
\qquad \lambda\in\Omega,
\label{eq:dichotomy_dims_S2}
\end{equation}
where $E_-^{u}$ denotes the unstable bundle propagated from $-\infty$ and $E_+^{s}$ the stable bundle
propagated from $+\infty$.

The dichotomies \eqref{eq:dichotomy_dims_S2} yield analytic stable/unstable bundles on $\mathbb R_\pm$
and therefore allow one to define the Evans function as a determinant measuring the intersection of the
corresponding propagated subspaces at $\xi=0$.
Zeros of the Evans function in $\Omega$ coincide, with algebraic multiplicity, with isolated point-spectrum
eigenvalues of the linearized operator.
A rigorous construction, analyticity statement, and slow/fast factorization are developed in the next section.

\section{Evans function: construction, analyticity, and eigenvalue characterization}
\label{sec:Evans}

Fix $\varepsilon>0$ sufficiently small and let $\Omega\subset\mathbb C$ be an admissible domain in the sense of
\eqref{eq:Omega_admissible_S2}. For each $\lambda\in\Omega$, the far--field matrices
$A_\pm(\lambda,\varepsilon)$ are hyperbolic and, by consistent splitting, possess two spatial eigenvalues
with $\Re\mu>0$ and two with $\Re\mu<0$.
Consequently, the Evans system
\begin{equation}
Y'=A(\xi,\lambda,\varepsilon)\,Y,
\qquad \xi\in\mathbb R,
\label{eq:Evans_system_S3}
\end{equation}
admits exponential dichotomies on $\mathbb R_\pm$ as stated in Section~\ref{sec_Evans0}, with associated bundles
$E_-^{u}(\lambda,\varepsilon)$ and $E_+^{s}(\lambda,\varepsilon)$ at $\xi=0$, satisfying
\eqref{eq:dichotomy_dims_S2}.

Let $\Phi(\xi,\zeta;\lambda,\varepsilon)$ denote the evolution operator of \eqref{eq:Evans_system_S3}, i.e.
$Y(\xi)=\Phi(\xi,\zeta;\lambda,\varepsilon)Y(\zeta)$ for all $\xi,\zeta\in\mathbb R$.
By the far--field convergence \eqref{eq:AtoApm_S2} and hyperbolicity of $A_\pm(\lambda,\varepsilon)$ on $\Omega$,
standard roughness results imply the existence of projections $P_+(\xi,\lambda,\varepsilon)$ on $\mathbb R_+$ and
$P_-(\xi,\lambda,\varepsilon)$ on $\mathbb R_-$ and constants $K,\eta>0$, locally uniform for $\lambda$ in compact
subsets of $\Omega$, such that
\begin{align}
\|\Phi(\xi,\zeta;\lambda,\varepsilon)P_+(\zeta,\lambda,\varepsilon)\|
&\le K e^{-\eta(\xi-\zeta)},
&& \xi\ge \zeta\ge 0, \label{eq:dicho_S3_1}\\
\|\Phi(\xi,\zeta;\lambda,\varepsilon)(I-P_+(\zeta,\lambda,\varepsilon))\|
&\le K e^{-\eta(\zeta-\xi)},
&& \zeta\ge \xi\ge 0, \label{eq:dicho_S3_2}\\
\|\Phi(\xi,\zeta;\lambda,\varepsilon)(I-P_-(\zeta,\lambda,\varepsilon))\|
&\le K e^{-\eta(\xi-\zeta)},
&& 0\ge \xi\ge \zeta, \label{eq:dicho_S3_3}\\
\|\Phi(\xi,\zeta;\lambda,\varepsilon)P_-(\zeta,\lambda,\varepsilon)\|
&\le K e^{-\eta(\zeta-\xi)},
&& 0\ge \zeta\ge \xi. \label{eq:dicho_S3_4}
\end{align}
The corresponding subspaces at $\xi=0$ are defined by
\begin{equation}
E_+^{s}(\lambda,\varepsilon):=\Ran P_+(0,\lambda,\varepsilon),
\qquad
E_-^{u}(\lambda,\varepsilon):=\Ran\big(I-P_-(0,\lambda,\varepsilon)\big),
\label{eq:bundles_at0_S3}
\end{equation}
and satisfy $\dim E_+^s=\dim E_-^u=2$ for all $\lambda\in\Omega$.
For background on exponential dichotomies and roughness, see
\cite{Henry,Sandstede2002,ZumbrunHoward}.

For $\lambda\in\Omega$, the coefficient matrix $A(\xi,\lambda,\varepsilon)$ depends analytically on $\lambda$ and
converges exponentially to $A_\pm(\lambda,\varepsilon)$ as $\xi\to\pm\infty$.
Hyperbolicity implies that the far--field spectral projections onto stable/unstable subspaces depend analytically
on $\lambda$; by analytic continuation of dichotomies on $\mathbb R_\pm$, the resulting bundles
$E_-^{u}(\lambda,\varepsilon)$ and $E_+^{s}(\lambda,\varepsilon)$ inherit analytic dependence on $\lambda$ on $\Omega$.

Choose analytic bases
\[
\{u_1(\lambda,\varepsilon),u_2(\lambda,\varepsilon)\}\ \text{for }E_-^{u}(\lambda,\varepsilon),
\qquad
\{s_1(\lambda,\varepsilon),s_2(\lambda,\varepsilon)\}\ \text{for }E_+^{s}(\lambda,\varepsilon).
\]
The Evans function is defined by the determinant
\begin{equation}
D(\lambda;\varepsilon)
:=
\det\Big(
u_1(\lambda,\varepsilon),
u_2(\lambda,\varepsilon),
s_1(\lambda,\varepsilon),
s_2(\lambda,\varepsilon)
\Big),
\qquad \lambda\in\Omega.
\label{eq:Evans_det_def_S3}
\end{equation}
Different choices of analytic bases for $E_-^{u}$ and $E_+^{s}$ multiply $D(\lambda;\varepsilon)$ by a
nonvanishing analytic factor. Hence $D$ is analytic on $\Omega$ and its zeros, together with their multiplicities,
are intrinsic; see
\cite{AlexanderGardnerJones,BridgesDerks,Sandstede2002}.

A value $\lambda\in\Omega$ belongs to the point spectrum of the linearized operator associated with
\eqref{eq:linPDE_S2} if and only if there exists a nontrivial solution $\hat v$ of \eqref{eq:spectral4_S2}
such that $\hat v$ (equivalently $Y$) decays exponentially as $\xi\to\pm\infty$.
In the first--order formulation \eqref{eq:Evans_system_S3}, this is equivalent to the existence of a nontrivial
solution $Y(\xi)$ satisfying
\[
Y(\xi)\in E_-^{u}(\lambda,\varepsilon)\ \text{as }\xi\to-\infty,
\qquad
Y(\xi)\in E_+^{s}(\lambda,\varepsilon)\ \text{as }\xi\to+\infty,
\]
which occurs precisely when the subspaces at $\xi=0$ intersect nontrivially:
\begin{equation}
E_-^{u}(\lambda,\varepsilon)\cap E_+^{s}(\lambda,\varepsilon)\neq\{0\}.
\label{eq:intersection_S3}
\end{equation}
Since $\dim E_-^{u}=\dim E_+^{s}=2$ in $\mathbb C^4$, condition \eqref{eq:intersection_S3} holds if and only if the
four vectors in \eqref{eq:Evans_det_def_S3} are linearly dependent, equivalently
\begin{equation}
D(\lambda;\varepsilon)=0.
\label{eq:zeros_eigs_S3}
\end{equation}
Thus zeros of the Evans function in $\Omega$ coincide with isolated point-spectrum eigenvalues, and the order of a
zero equals the algebraic multiplicity of the corresponding eigenvalue.

Let $\Gamma\Subset\Omega$ be a positively oriented simple closed contour such that $D(\lambda;\varepsilon)\neq0$ for
all $\lambda\in\Gamma$. Then $D$ has finitely many zeros in the interior of $\Gamma$, counted with multiplicity, and
\begin{equation}
N_\Gamma
=
\frac{1}{2\pi i}\int_{\Gamma}\frac{D'(\lambda;\varepsilon)}{D(\lambda;\varepsilon)}\,d\lambda
\label{eq:arg_principle_S3}
\end{equation}
equals the total algebraic multiplicity of point-spectrum eigenvalues enclosed by $\Gamma$.
This provides the analytic basis for stability computations and for excluding unstable eigenvalues in subsets of
$\Omega$.

On compact subsets of $\Omega$, the far--field dispersion relation $p_4(\mu,\lambda,\varepsilon)=0$ defined in
\eqref{eq:p4_S2} exhibits a spatial slow--fast structure induced by the fourth--order regularization.
Two spatial eigenvalues remain $O(1)$ as $\varepsilon\to0$ and converge to the spatial eigenvalues of the reduced
second--order classical sine--Gordon spectral problem, while the remaining two form a fast pair separated from the
origin in the $\mu$--plane by $O(\varepsilon^{-1/2})$.
In the mixed space--time case ($\delta>0$, $d=0$) the fast pair is governed asymptotically by the coupling term
$\varepsilon\delta\lambda^2\mu^2$, whereas in the elliptic case ($\delta=0$, $d>0$) it is governed by the stabilizing
term $-d\varepsilon\mu^4$.
In both cases, for $\lambda\in\Omega$ the fast subsystem is uniformly hyperbolic and contributes no zeros to the
Evans function.
Consequently, after an analytic conjugation of the spatial dynamics on $\Omega$, the Evans function admits a
factorization
\begin{equation}
D(\lambda;\varepsilon)
=
D_{\mathrm{slow}}(\lambda;\varepsilon)\,D_{\mathrm{fast}}(\lambda;\varepsilon),
\qquad \lambda\in\Omega,
\label{eq:Evans_factorization_S3}
\end{equation}
where $D_{\mathrm{fast}}$ is analytic and nonvanishing on $\Omega$, and the slow factor determines the point spectrum
in $\Omega$.

Translation invariance implies that $\partial_\xi U_\varepsilon(\xi)$ lies in the kernel of the linearized operator,
hence $\lambda=0$ is a point-spectrum eigenvalue.
Accordingly, $D(0;\varepsilon)=0$, and spectral stability reduces to excluding additional zeros of
$D(\lambda;\varepsilon)$ in $\{\Re\lambda\ge0\}\cap\Omega$.
\section{Local spectral structure and Evans function factorization}
\label{sec:local_spectral_structure}

We analyze the local spectral structure of the linearized operator associated with
travelling waves of the regularized sine--Gordon equation \eqref{eq:SG_general_intr}
building directly on the spectral framework and Evans--function construction developed
in Section~\ref{sec_Evans0} and Section~\ref{sec:Evans}. Throughout, the admissible
spectral domain $\Omega$ is understood in the sense of Section~\ref{sec_Evans0} and is
chosen so as to exclude the essential spectrum.

Our objective is to show that, for $\varepsilon$ sufficiently small, the point spectrum
in a neighborhood of $\lambda=0$ is completely determined by the slow spatial dynamics
and coincides with that of the reduced second--order reference problem. The additional
fourth--order regularization terms generate only fast spatial modes that do not contribute
to the point spectrum in $\Omega$.

\subsection{Slow--fast splitting of the spatial eigenvalues}

As recalled in Section~\ref{sec_Evans0}, the far--field spatial eigenvalues $\mu$ are
determined by the quartic dispersion relation
\[
p_4(\mu,\lambda,\varepsilon)
=
\det\bigl(\mu I-A_\pm(\lambda,\varepsilon)\bigr),
\]
where $A_\pm(\lambda,\varepsilon)$ denote the constant far--field matrices associated
with the first--order Evans system.
For the present model, this dispersion relation can be written explicitly as
\begin{equation}
p_4(\mu,\lambda,\varepsilon)
=
-d\varepsilon \mu^4
+\bigl(1-c_\varepsilon^{\,2}-\varepsilon(\beta\lambda+\delta\lambda^2)\bigr)\mu^2
-2c_\varepsilon\lambda\mu
+\lambda^2+\varepsilon\alpha\lambda+q_\pm,
\label{eq:p4_local}
\end{equation}
with $q_\pm=\cos(U_\varepsilon(\pm\infty))$.

To ensure a uniform slow--fast spectral separation on $\Omega$, we impose the
nonresonance condition
\begin{equation}
\delta\lambda^2-d\neq0
\qquad \text{for all }\lambda\in\Omega,
\label{eq:Omega_nonresonant}
\end{equation}
which excludes the nongeneric situation in which the fourth--order regularization loses
its leading-order influence on the spatial dynamics.

\begin{proposition}[Slow--fast decomposition]
\label{prop:slow_fast}
Assume \eqref{eq:Omega_nonresonant}. Then, for $\varepsilon$ sufficiently small and
$\lambda$ in compact subsets of $\Omega$, the four roots of
$p_4(\cdot,\lambda,\varepsilon)$ decompose into:
\begin{itemize}
\item two \emph{slow} roots
\[
\mu_{\mathrm{slow},\pm}(\lambda,\varepsilon)=\mathcal{O}(1),
\]
which converge as $\varepsilon\to0$ to the spatial eigenvalues of the reduced
second--order reference problem;

\item two \emph{fast} roots $\mu_{\mathrm{fast},\pm}(\lambda,\varepsilon)$ satisfying
\[
|\mu_{\mathrm{fast},\pm}(\lambda,\varepsilon)|\to\infty
\qquad \text{as }\varepsilon\to0,
\]
with asymptotic scaling governed by the dominant balance of the fourth--order
regularization terms. In particular,
\begin{equation}
\mu_{\mathrm{fast},\pm}(\lambda,\varepsilon)
=
\begin{cases}
\displaystyle
\pm \frac{\sqrt{1-c_\varepsilon^{\,2}}}{\sqrt{d\,\varepsilon}}+\mathcal{O}(1),
& \text{if } d>0,\\[2ex]
\displaystyle
\pm i\,\frac{\sqrt{1-c_\varepsilon^{\,2}}}{\sqrt{\varepsilon\delta}}+\mathcal{O}(1),
& \text{if } d=0,\ \delta>0,
\end{cases}
\qquad \varepsilon\to0.
\label{eq:fast_scaling}
\end{equation}
\end{itemize}
The decomposition holds uniformly for $\lambda$ in compact subsets of $\Omega$.
\end{proposition}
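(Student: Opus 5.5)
The plan is to work with the characteristic polynomial of the companion matrix $A_\pm(\lambda,\varepsilon)$ in \eqref{eq:Acomp_S2} rather than with the displayed form \eqref{eq:p4_local}. Multiplying $\det(\mu I-A_\pm)$ by the leading coefficient gives a quartic of the form
\[
\varepsilon\kappa\,\mu^4+\bigl(1-c_\varepsilon^{\,2}-\varepsilon(\beta\lambda+\delta\lambda^2)\bigr)\mu^2-2c_\varepsilon\lambda\mu+\lambda^2+\varepsilon\alpha\lambda+q_\pm .
\]
Here $\kappa$ denotes the coefficient of the fourth--order term. In the normalisation of Section~\ref{sec_Evans0} it is $\kappa=\delta\lambda^2-d$, which reduces to $-d$ when $\delta=0$; from the moving--frame derivation it equals $\delta c_\varepsilon^2-d$. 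By \eqref{eq:Omega_nonresonant}, or \eqref{eq:nondegeneracy_S2} respectively, $\kappa$ is nonzero on $\Omega$. It is also continuous, so on a compact $K\subset\Omega$ we have $0<\kappa_0\le|\kappa|\le\kappa_1$.

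Pinning down this leading coefficient consistently is the step I expect to be the main obstacle. In the case $d=0$, $\delta>0$, the polynomial \eqref{eq:p4_local} as printed is only quadratic in $\mu$, so the fast roots are invisible there. They must come from the $\varepsilon\delta$ fourth--order contribution that is encoded in the companion matrix. I would first verify carefully which coefficient multiplies $\mu^4$. The two fast roots then exist exactly because $\varepsilon\kappa\ne0$, and the whole argument uses only $\kappa\ne0$ uniformly on $K$.

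\emph{Slow roots.} Fix $R>0$ and consider $|\mu|\le R$. As $\varepsilon\to0$ the quartic converges uniformly on this disk, and uniformly in $\lambda\in K$, to the reduced quadratic
\[
p_2(\mu,\lambda)=(1-c_0^2)\mu^2-2c_0\lambda\mu+\lambda^2+q_\pm .
\]
This is the far--field polynomial of the classical second--order sine--Gordon spectral problem, and its leading coefficient $1-c_0^2$ is positive because $|c_0|<1$. Choose $R$ larger than the moduli of the two roots of $p_2$ over $K$, and take small circles around these roots on which $p_2$ does not vanish. Rouch\'e's theorem, together with compactness of $K$, then gives exactly two roots $\mu_{\mathrm{slow},\pm}(\lambda,\varepsilon)$ in $|\mu|<R$, converging to the roots of $p_2$. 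Where a double root occurs, the convergence is to the root pair. Because the reduced problem is hyperbolic on $\Omega$, the limiting roots are off the imaginary axis.

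\emph{Fast roots.} Rescale $\mu=\nu/\sqrt{\varepsilon}$ and multiply the quartic by $\varepsilon$. This gives
\[
\kappa\nu^4+(1-c_\varepsilon^{\,2})\nu^2+\mathcal O(\sqrt\varepsilon)\,\nu+\mathcal O(\varepsilon)=0,
\]
uniformly for $\nu$ bounded and $\lambda\in K$. The limiting equation $\nu^2\bigl(\kappa\nu^2+1-c_0^2\bigr)=0$ has the double root $\nu=0$, which accounts for the slow roots already found. It also has two simple roots $\nu=\pm\sqrt{-(1-c_0^2)/\kappa}$, which stay bounded away from $0$ and $\infty$ because of the bounds on $|\kappa|$.

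By the implicit function theorem, or again by Rouch\'e's theorem on small circles, these simple roots persist with $\mathcal O(\sqrt\varepsilon)$ corrections. Undoing the scaling gives
\[
\mu_{\mathrm{fast},\pm}=\pm\sqrt{-(1-c_\varepsilon^{\,2})/(\kappa\varepsilon)}+\mathcal O(1).
\]
Two sign cases then give \eqref{eq:fast_scaling}:
\begin{itemize}
\item if $d>0$, then $\kappa=-d<0$ at leading order, so the fast roots are real, $\pm\sqrt{1-c_\varepsilon^{\,2}}/\sqrt{d\varepsilon}$;
\item if $d=0$ and $\delta>0$, then $\kappa>0$, so the fast roots are purely imaginary at leading order, $\pm i\sqrt{1-c_\varepsilon^{\,2}}/\sqrt{\varepsilon\delta}$.
\end{itemize}
Counting degrees, $2+2=4$, so these are all the roots. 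Every constant in the argument depends only on $\kappa_0$, $\kappa_1$, $1-c_0^2$ and $\sup_K|\lambda|$, which makes the decomposition uniform on compact subsets of $\Omega$.
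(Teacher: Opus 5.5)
Your overall structure is sound, and it is more careful than the paper's proof. The paper reads the balance $-d\nu^4+(1-c_\varepsilon^{\,2})\nu^2=0$ off the printed $p_4$ when $d>0$, and simply asserts the case $d=0$. The parts of your argument that go through are:
\begin{itemize}
\item the Rouch\'e argument for the slow pair;
\item the rescaling $\mu=\nu/\sqrt{\varepsilon}$, with persistence of the simple roots;
\item the degree count;
\item uniformity via $\kappa_0\le|\kappa|\le\kappa_1$.
\end{itemize}
A quartic derived directly in the moving frame would also carry an $\mathcal O(\varepsilon)\mu^3$ term, but after rescaling this only adds $\mathcal O(\sqrt{\varepsilon})\nu^3$. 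You are also right that \eqref{eq:p4_local} is only quadratic in $\mu$ when $d=0$, a defect the paper's proof passes over. However, what your argument actually establishes is
\[
\mu_{\mathrm{fast},\pm}=\pm\sqrt{-\frac{1-c_\varepsilon^{\,2}}{\kappa\,\varepsilon}}+\mathcal O(1),
\]
and the last step, where you specialise this to \eqref{eq:fast_scaling}, does not follow.

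\textbf{Case $d>0$.} You set $\kappa=-d$ ``at leading order'', but when $\delta>0$ neither of your candidates equals $-d$. Both $\delta\lambda^2-d$ and $\delta c_\varepsilon^{\,2}-d$ differ from $-d$ by an amount that generically does not vanish as $\varepsilon\to0$. That shift moves $\mu_{\mathrm{fast}}$ at order $\varepsilon^{-1/2}$, far beyond the claimed $\mathcal O(1)$ error. If $\delta c_0^2>d$, the moving-frame $\kappa$ is even positive, and the fast pair is then nearly imaginary rather than real.

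\textbf{Case $d=0$.} With $\kappa=\delta\lambda^2$, $\kappa$ is not positive for nonreal $\lambda$, and your formula gives $\pm i\sqrt{1-c_\varepsilon^{\,2}}/(\lambda\sqrt{\varepsilon\delta})$. With $\kappa=\delta c_\varepsilon^{\,2}$, it gives $\pm i\sqrt{1-c_\varepsilon^{\,2}}/(|c_\varepsilon|\sqrt{\varepsilon\delta})$, which also degenerates as $c_0\to0$. The displayed second line corresponds to $\kappa=\delta$, and none of the consistent normalisations (printed $p_4$, companion matrix, moving frame) produces that.

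So the obstacle you correctly flagged, fixing the $\mu^4$ coefficient, is bypassed in the final line rather than resolved. With a consistent leading coefficient, your method proves the first line of \eqref{eq:fast_scaling} only when $\delta=0$ (or when $\delta c_\varepsilon^{\,2}$ is negligible). In the moving-frame normalisation, the second line is missing a factor $1/|c_\varepsilon|$. You should state the conclusion in terms of $\kappa$ and record the discrepancy with the printed formula, rather than asserting the printed formula.
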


\begin{proof}
We view $p_4(\mu,\lambda,\varepsilon)$ as a singularly perturbed polynomial in $\mu$.
Setting $\varepsilon=0$ in \eqref{eq:p4_local} yields a quadratic dispersion relation,
whose roots remain bounded and give rise to the slow pair.

For the fast roots, when $d>0$ we rescale $\mu=\varepsilon^{-1/2}\nu$ and substitute into
\eqref{eq:p4_local}. This gives
\[
p_4(\varepsilon^{-1/2}\nu,\lambda,\varepsilon)
=
\varepsilon^{-1}\Bigl(-d\nu^4+(1-c_\varepsilon^{\,2})\nu^2\Bigr)
+\mathcal{O}(\varepsilon^{-1/2}),
\qquad \varepsilon\to0,
\]
uniformly for $\lambda$ in compact subsets of $\Omega$.
The leading-order equation
\[
-d\nu^4+(1-c_\varepsilon^{\,2})\nu^2=0
\]
yields the nonzero roots
\[
\nu=\pm \frac{\sqrt{1-c_\varepsilon^{\,2}}}{\sqrt d},
\]
while the double root $\nu=0$ corresponds to the slow group. Undoing the scaling gives
the asymptotics for the fast pair when $d>0$.

When $d=0$ and $\delta>0$, the dominant singular balance instead arises from the
$\delta$--regularization, leading to the scale
\[
\mu=\mathcal{O}((\varepsilon\delta)^{-1/2}),
\]
and hence to the stated asymptotics in the inertial case.

Uniformity with respect to $\lambda$ follows from the nonresonance condition
\eqref{eq:Omega_nonresonant} and standard perturbation theory for polynomial roots.
\end{proof}

This spectral separation induces a corresponding splitting of the stable and unstable
subspaces of the far--field matrices $A_\pm(\lambda,\varepsilon)$ into slow and fast
components, consistent with the exponential dichotomies established in
Section~\ref{sec_Evans0}.

\subsection{Evans function factorization}

The slow--fast splitting of the spatial dynamics lifts naturally to the level of the
Evans function.

\begin{proposition}[Evans factorization]
\label{prop:evans_factorization}
Let $\Omega$ be an admissible domain as defined in Section~\ref{sec_Evans0} and assume
\eqref{eq:Omega_nonresonant}. Then, for $\varepsilon$ sufficiently small, the Evans
function admits a factorization
\[
D(\lambda;\varepsilon)
=
D_{\mathrm{slow}}(\lambda;\varepsilon)\,
D_{\mathrm{fast}}(\lambda;\varepsilon),
\qquad \lambda\in\Omega,
\]
where $D_{\mathrm{slow}}$ coincides, up to a nonzero analytic factor, with the Evans
function of the reduced second--order reference problem, while
$D_{\mathrm{fast}}$ is analytic and nonvanishing on $\Omega$.
\end{proposition}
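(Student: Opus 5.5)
We will carry the slow--fast splitting of Proposition~\ref{prop:slow_fast} over from the far--field matrices $A_\pm(\lambda,\varepsilon)$ to the full nonautonomous Evans system \eqref{eq:Evans_system_S3}. The tool is a $\xi$--dependent block diagonalization. After that, the Evans determinant \eqref{eq:Evans_det_def_S3} factors into the product of a $2\times2$ slow determinant and a $2\times2$ fast determinant.

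\emph{Step 1: rescaling.} First we remove the singular size of the coefficients. Set $\tilde Y=S_\varepsilon Y$ with $S_\varepsilon=\mathrm{diag}(1,1,\varepsilon^{1/2},\varepsilon)$, or the analogous scaling matched to \eqref{eq:fast_scaling}. In the new variables the system takes the block form
\[
\tilde Y'=\begin{pmatrix} B_{11}(\xi,\lambda,\varepsilon) & \varepsilon^{1/2}B_{12}\\ \varepsilon^{1/2}B_{21} & \varepsilon^{-1/2}B_{22}(\xi,\lambda,\varepsilon)\end{pmatrix}\tilde Y .
\]
Here $B_{11}$ tends, as $\varepsilon\to0$, to the companion matrix of the reduced second--order problem $(1-c^2)\hat v''-2c\lambda\hat v'+(\lambda^2+\cos U_0)\hat v=0$. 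The block $B_{22}$ has eigenvalues $\nu$ bounded away from the imaginary axis, uniformly for $\lambda$ in compact subsets of $\Omega$. This uses \eqref{eq:Omega_nonresonant} together with the root separation in Proposition~\ref{prop:slow_fast}, and the limiting slow and fast spectra are disjoint.

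\emph{Step 2: decoupling.} Next we decouple the blocks by a near--identity transformation
\[
\tilde Y=\begin{pmatrix} I & \varepsilon^{1/2}H_{12}\\ \varepsilon^{1/2}H_{21} & I\end{pmatrix}Z .
\]
We solve for $H_{12},H_{21}$ using the Riccati or Sylvester equations of the standard block--diagonalization lemma for systems with a large hyperbolic block. Those equations are uniquely solvable because the slow and fast spectra are separated. Their solutions are analytic in $\lambda$, bounded uniformly in $\xi$, and converge as $\xi\to\pm\infty$, since the coefficients converge exponentially by \eqref{eq:AtoApm_S2}.

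\emph{Step 3: factorization.} In the $Z$ coordinates the system is block diagonal, $Z_s'=\hat B_{11}Z_s$ and $Z_f'=\varepsilon^{-1/2}\hat B_{22}Z_f$. Consistent splitting gives two fast roots of opposite sign of real part when $d>0$. Hence $E_-^u$ is spanned by one slow and one fast unstable vector, and $E_+^s$ likewise by one slow and one fast stable vector. The Evans determinant then equals $\det T(0)$ times $D_{\mathrm{slow}}\cdot D_{\mathrm{fast}}$. Here $D_{\mathrm{slow}}=\det(u^s,s^s)$ and $D_{\mathrm{fast}}=\det(u^f,s^f)$, and $T$ is the combined transformation, so $\det T(0)$ is a nonzero analytic factor.

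\emph{Step 4: identifying the factors.} The fast factor does not vanish. Along the fast block the coefficients vary on the slow scale $\xi$, whereas the dichotomy rate is $\varepsilon^{-1/2}$. A roughness/adiabatic argument therefore shows that the fast unstable and stable subspaces at $\xi=0$ are $O(\varepsilon^{1/2})$ close to the eigenvectors of $\hat B_{22}(0)$ for eigenvalues of opposite sign. These eigenvectors are transverse, so $D_{\mathrm{fast}}$ is nonzero, analytic, and uniformly bounded away from zero. For the slow factor, $\hat B_{11}$ is an $O(\varepsilon^{1/2})$ perturbation of the reduced system, and the limit $U_\varepsilon\to U_0$ is smooth. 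Continuity of dichotomies then shows that $D_{\mathrm{slow}}$ equals the reduced Evans function up to an analytic nonvanishing normalization.

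\emph{Main obstacle.} The hardest step is Step 2 in the purely inertial case $d=0$. There the fast roots are $\pm i\sqrt{1-c^2}/\sqrt{\varepsilon\delta}$ to leading order, so the fast block is only hyperbolic through its $O(1)$ correction. The exponent $\varepsilon^{-1/2}\nu$ then has real part $O(1)$ rather than $O(\varepsilon^{-1/2})$, and the decoupling and adiabatic estimates lose their large gap. To handle this we would first compute the $O(1)$ correction to $\mu_{\mathrm{fast},\pm}$ from \eqref{eq:p4_local} and check that its real part has a definite nonzero sign on $\Omega$. We would then redo the block diagonalization with a WKB--type transformation in which the rapid rotation $e^{\pm i\xi/\sqrt{\varepsilon\delta}}$ is factored out explicitly. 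After that, the residual damping supplies the dichotomy needed for the argument of Step 4.
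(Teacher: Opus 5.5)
\textbf{Main gap: the inertial case $d=0$, $\delta>0$.} The statement covers this case through \eqref{eq:Omega_nonresonant} and the second case of \eqref{eq:fast_scaling}. Your proposal leaves it open, and the remedy you sketch cannot give the structure Steps~3--4 need.

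Write $\varepsilon\,p(\varepsilon^{-1/2}\nu)=P_0(\nu)+\varepsilon^{1/2}P_1(\nu)+\mathcal O(\varepsilon)$ for a far-field quartic whose $\mu^4,\mu^3$ coefficients are $\mathcal O(\varepsilon)$ and whose other coefficients are $\mathcal O(1)$.
\begin{itemize}
\item $P_0$ contains only $\nu^4$ and $\nu^2$.
\item $P_1$ collects the odd terms: $\mu^3$ with an $\mathcal O(\varepsilon)$ coefficient and $\mu$ with an $\mathcal O(1)$ coefficient.
\item Since $P_1$ and $P_0'$ are both odd, the first correction $\nu_1=-P_1(\nu_0)/P_0'(\nu_0)$ is the same for $\nu_0$ and $-\nu_0$.
\end{itemize}
So near an imaginary leading pair, $\mu_{\mathrm{fast},\pm}=\pm i\kappa\,\varepsilon^{-1/2}+\nu_1+\mathcal O(\varepsilon^{1/2})$, with $\kappa$ real and a \emph{common} $\mathcal O(1)$ shift.

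Suppose $\Re\nu_1$ has a definite sign, as you hope to verify. Then both fast roots lie on the same side of $i\mathbb R$, and the fast block has a trivial dichotomy. Both fast directions belong to $E_-^u$, or both to $E_+^s$. There is then no pair $(u^f,s^f)$ whose transversality Step~4 could prove. With the one/one slow split you assume, the count becomes $3/1$, which contradicts \eqref{eq:dichotomy_dims_S2}. If instead $\Re\nu_1=0$, the fast real parts are only $\mathcal O(\varepsilon^{1/2})$ and the splitting is not uniform.

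You also cannot compute this pair from \eqref{eq:p4_local}, because that polynomial is quadratic in $\mu$ when $d=0$. The pair comes from the characteristic polynomial of \eqref{eq:Acomp_S2}. Its leading coefficient $\varepsilon(\delta\lambda^2-d)$ makes the pair imaginary exactly for real $\lambda$. There the common shift is, to leading order, $-c_\varepsilon\lambda/(1-c_\varepsilon^{2})$. The paper's own proof does not close this case either. It appeals to transversality ``ensured by'' \eqref{eq:Omega_nonresonant}, which says nothing about a pair that is imaginary at leading order.

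\textbf{The case $d>0$.} Here your outline is the paper's argument made explicit: spectral gap, invariant slow/fast splitting of the dichotomies, adapted bases, and transversality of the fast pair. Two steps need repair.

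In Step~1, the scaling $S_\varepsilon=\mathrm{diag}(1,1,\varepsilon^{1/2},\varepsilon)$ does not give the block form you state. In the new variables:
\begin{itemize}
\item the $(\tilde y_1,\tilde y_2)$ block is the nilpotent $\bigl(\begin{smallmatrix}0&1\\0&0\end{smallmatrix}\bigr)$;
\item the coupling of $\tilde y_3$ into $\tilde y_2'$ is $\varepsilon^{-1/2}$;
\item the slow-to-fast coupling is $\mathcal O(1)$.
\end{itemize}
The reduced companion matrix appears only as the Schur complement $B_{11}-B_{12}B_{22}^{-1}B_{21}$. Concretely, remove the lower-left block with an $\mathcal O(\varepsilon^{1/2})$ Riccati solution; this changes the slow block at order one. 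Equivalently, work in coordinates adapted to the critical manifold as in Section~\ref{sec2a}. This still works, since block-triangular form already makes the determinant factor and the triangular change of variables has determinant $1$. But it is not a near-identity decoupling, so your Step~2 estimates must be redone.

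In Step~4, continuity of dichotomies gives only $D_{\mathrm{slow}}(\cdot,\varepsilon)\to D_{\mathrm{red}}$ locally uniformly. It does not give equality up to a nonvanishing factor. That equality is in fact false for the $\varepsilon=0$ problem you name: its double zero at $\lambda=0$ splits into $0$ and $\lambda_2(\varepsilon)$, as in Proposition~\ref{prop:local_equiv_near0_beta_delta_d}. The identification that does hold, by construction, is with the $\varepsilon$-dependent slow system $Z_s'=\hat B_{11}Z_s$. Theorem~\ref{thm:local_equivalence} then treats that system as a regular perturbation of the reference problem.
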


\begin{proof}
Proposition~\ref{prop:slow_fast} provides a uniform spectral gap between the slow and
fast spatial eigenvalues on compact subsets of $\Omega$. This yields an invariant
splitting of the exponential dichotomies on $\mathbb{R}_\pm$ into slow and fast
subbundles that depend analytically on $\lambda$.
Choosing Evans bases adapted to this splitting gives the stated factorization.
The nonvanishing of $D_{\mathrm{fast}}$ follows from uniform transversality of the fast
stable and unstable subspaces ensured by \eqref{eq:Omega_nonresonant}
\end{proof}

\subsection{Local spectral equivalence near $\lambda=0$}

The above factorization implies that the local point spectrum near $\lambda=0$ is
entirely determined by the slow Evans factor.

\begin{theorem}[Local spectral equivalence]
\label{thm:local_equivalence}
Assume that, for the reference problem, the Evans function has no zeros in
$\Re\lambda>0$ and that $\lambda=0$ is a simple zero associated with translation
invariance. Then, for $\varepsilon$ sufficiently small, the same properties hold for
$D(\lambda;\varepsilon)$ in a neighborhood of $\lambda=0$. In particular, no additional
eigenvalues bifurcate from $\lambda=0$ as a consequence of the fourth--order
regularization terms.
\end{theorem}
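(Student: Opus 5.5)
The plan is to combine the Evans factorization of Proposition~\ref{prop:evans_factorization} with a Rouché (or Hurwitz) argument on a small contour around the origin. Fix a small disc $B_r=\{|\lambda|<r\}$ contained in an admissible domain $\Omega$ satisfying \eqref{eq:Omega_nonresonant}. In the elliptic case $d>0$ this holds for $r<\sqrt{d/\delta}$, or for any $r$ if $\delta=0$. In the purely inertial case $d=0$, the resonance $\delta\lambda^2=0$ sits exactly at $\lambda=0$, so the disc must be punctured or replaced by a sector. I would treat this case separately at the end.

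On $B_r$, Proposition~\ref{prop:evans_factorization} gives $D(\lambda;\varepsilon)=D_{\mathrm{slow}}(\lambda;\varepsilon)D_{\mathrm{fast}}(\lambda;\varepsilon)$ with $D_{\mathrm{fast}}\neq0$. The zeros of $D$ in $B_r$ are therefore exactly those of $D_{\mathrm{slow}}$.

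The key step is to show that $D_{\mathrm{slow}}(\cdot;\varepsilon)\to D_0$ uniformly on $\overline{B_r}$ as $\varepsilon\to0$, after normalizing the analytic factor, where $D_0$ is the reference Evans function. The slow subsystem is a regular perturbation of the second-order reference eigenvalue problem $(1-c_0^2)\hat v''-2c_0\lambda\hat v'+(\lambda^2+\cos U_0)\hat v=0$. Its coefficients are $O(\varepsilon)$-close to the reference ones uniformly in $\xi$, using $U_\varepsilon\to U_0$ and $c_\varepsilon\to c_0$ from the slow-manifold reduction of Section~\ref{sec2a}.

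I would obtain this convergence from the slow bundles given by the conjugation underlying Proposition~\ref{prop:evans_factorization}. By roughness of exponential dichotomies, the slow unstable and stable bundles at $\xi=0$ converge to their reference counterparts, uniformly for $\lambda$ in compact sets. Hence the slow determinant converges uniformly. This is where the uniform spectral gap of Proposition~\ref{prop:slow_fast} is used: it guarantees that the conjugation to block-diagonal slow/fast form is $O(\sqrt\varepsilon)$-close to the identity on the slow block. I expect this to be the main obstacle, since it requires controlling the coupling terms of size $|\mu_{\mathrm{fast}}|^{-1}$ uniformly on the whole line.

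With uniform convergence in hand, the hypotheses give the counting. By assumption $D_0$ has a simple zero at $0$, and shrinking $r$ ensures it has no other zeros in $\overline{B_r}$. So $|D_0|\ge m>0$ on $\partial B_r$, and Rouché's theorem (equivalently \eqref{eq:arg_principle_S3}) shows that $D(\cdot;\varepsilon)$ has exactly one zero in $B_r$ for small $\varepsilon$. Translation invariance gives $D(0;\varepsilon)=0$, as noted at the end of Section~\ref{sec:Evans}, so this zero is $\lambda=0$ and it is simple. Hence no eigenvalue bifurcates from $0$.

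The absence of zeros in $\Re\lambda>0$ near the origin follows immediately from the same count, since $0$ is the only zero in $B_r$.

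For the inertial case $d=0$, I would use a contour that avoids the resonance point $\lambda=0$ while still enclosing a punctured neighborhood of it. On this contour the fast roots $\pm i\sqrt{1-c^2}/\sqrt{\varepsilon\delta}$ must remain uniformly hyperbolic after the $\lambda$-dependent correction. If that fails, I would fall back on a direct regular-perturbation argument for the fourth-order operator near $\lambda=0$, viewing the fourth-order term as a relatively bounded $O(\varepsilon)$ perturbation and applying Kato-type simple-eigenvalue continuation.
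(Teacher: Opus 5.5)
Your proposal follows the paper's proof: Proposition~\ref{prop:evans_factorization} reduces the zeros of $D$ near $\lambda=0$ to those of $D_{\mathrm{slow}}$, which is then treated as a regular perturbation of the reference Evans function, and your Rouch\'e count on $\partial B_r$ together with $D(0;\varepsilon)=0$ just makes explicit what the paper compresses into ``its zero structure near $\lambda=0$ remains unchanged.'' Your remark that \eqref{eq:Omega_nonresonant} fails at $\lambda=0$ when $d=0$ is fair, since the paper's argument also tacitly needs $d>0$ there, but the Kato fallback you suggest would not close that case: $\varepsilon\,\partial_\xi^4$ is a singular perturbation and is not relatively bounded with respect to a second--order operator.
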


\begin{proof}
By Proposition~\ref{prop:evans_factorization}, zeros of $D(\lambda;\varepsilon)$ in
$\Omega$ coincide with those of $D_{\mathrm{slow}}(\lambda;\varepsilon)$.
Since $D_{\mathrm{slow}}$ is a regular perturbation of the reference Evans function,
its zero structure near $\lambda=0$ remains unchanged for $\varepsilon$ sufficiently
small. 
\end{proof}

\subsection{Analytic structure and argument principle}

Finally, we record the compatibility of the above results with the argument--principle
framework employed in the Evans--function analysis.

\begin{proposition}[Argument principle on $\Omega$]
\label{prop:arg_principle_local}
Let $\Gamma\subset\Omega$ be a positively oriented simple closed contour on which
$D(\lambda;\varepsilon)\neq0$. Then the number of eigenvalues enclosed by $\Gamma$ is
given by
\[
N
=
\frac{1}{2\pi i}
\int_\Gamma
\frac{D'(\lambda;\varepsilon)}{D(\lambda;\varepsilon)}\,d\lambda.
\]
Moreover, this number coincides with the winding number of
$D_{\mathrm{slow}}(\lambda;\varepsilon)$.
\end{proposition}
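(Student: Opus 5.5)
The argument rests on two ingredients: the analyticity of $D(\cdot;\varepsilon)$ on $\Omega$ from Section~\ref{sec:Evans}, and the factorization of Proposition~\ref{prop:evans_factorization}. First, $\Gamma$ is compact and contained in the open set $\Omega$, and $D$ is analytic on $\Omega$. Its interior is therefore also contained in $\Omega$, since $\Omega$ is simply connected and so contains the inside of any Jordan curve lying in it. Because $D\neq 0$ on $\Gamma$, the classical argument principle applies. It shows that $\frac{1}{2\pi i}\int_\Gamma D'/D\,d\lambda$ equals the number of zeros of $D$ inside $\Gamma$, counted with their orders. By \eqref{eq:zeros_eigs_S3}, and the statement in Section~\ref{sec:Evans} that the order of a zero equals the algebraic multiplicity, this number is the count of enclosed eigenvalues with multiplicity. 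This gives the first assertion; it is essentially a restatement of \eqref{eq:arg_principle_S3}.

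For the second assertion I would use the factorization $D=D_{\mathrm{slow}}D_{\mathrm{fast}}$ on $\Omega$. On $\Gamma$ we have $D\neq0$ and $D_{\mathrm{fast}}\neq0$, so $D_{\mathrm{slow}}=D/D_{\mathrm{fast}}$ is analytic and nonvanishing on $\Gamma$. Its winding number along $\Gamma$ is therefore well defined. The logarithmic derivative is additive:
\[
\frac{D'}{D}=\frac{D_{\mathrm{slow}}'}{D_{\mathrm{slow}}}+\frac{D_{\mathrm{fast}}'}{D_{\mathrm{fast}}}.
\]
The function $D_{\mathrm{fast}}$ is analytic and zero-free on the simply connected domain $\Omega$. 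Hence $D_{\mathrm{fast}}'/D_{\mathrm{fast}}$ is analytic there, and Cauchy's theorem gives $\int_\Gamma D_{\mathrm{fast}}'/D_{\mathrm{fast}}\,d\lambda=0$. Equivalently, $D_{\mathrm{fast}}$ has an analytic logarithm on $\Omega$, so its winding number along $\Gamma$ is zero. It follows that $N$ equals the winding number of $D_{\mathrm{slow}}(\Gamma)$ about $0$.

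The main obstacle is justifying that the factorization holds on all of a neighbourhood of $\overline{\mathrm{int}\,\Gamma}$ with $\varepsilon$ fixed. Proposition~\ref{prop:slow_fast} gives uniformity only on compact subsets of $\Omega$, and the smallness threshold for $\varepsilon$ may depend on the compact set. My plan is to fix the compact set $K=\overline{\mathrm{int}\,\Gamma}\subset\Omega$ first, then take $\varepsilon$ below the threshold for a slightly larger compact neighbourhood $K'$ of $K$ inside $\Omega$. Cauchy's theorem is then applied on the open interior of $K'$, which contains $\Gamma$ and its interior. If $K'$ is not simply connected, I would still only need $D_{\mathrm{fast}}'/D_{\mathrm{fast}}$ to be analytic on a neighbourhood of $\Gamma$ together with its interior, and this suffices. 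I would also note that any nonzero analytic normalization factor relating $D_{\mathrm{slow}}$ to the reference Evans function, or arising from a change of basis, is likewise zero-free. It therefore contributes zero winding, so the conclusion does not depend on the choice of bases.
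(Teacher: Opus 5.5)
Your proposal is correct and follows the same route as the paper's proof. The paper argues from analyticity of $D$ on $\Omega$ via consistent splitting, and from the factorization $D=D_{\mathrm{slow}}D_{\mathrm{fast}}$ with $D_{\mathrm{fast}}$ nonvanishing, so that the winding numbers of $D$ and $D_{\mathrm{slow}}$ agree; your added care in fixing the compact set $\overline{\mathrm{int}\,\Gamma}$ before choosing $\varepsilon$ small makes explicit an order of quantifiers that the paper's two-line proof leaves implicit.
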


\begin{proof}
Analyticity of $D(\lambda;\varepsilon)$ on $\Omega$ follows from the consistent--splitting
property established in Section~\ref{sec:Evans}. The Evans--function factorization and
the nonvanishing of $D_{\mathrm{fast}}$ imply that the winding numbers of $D$ and
$D_{\mathrm{slow}}$ agree. 
\end{proof}

The results of this section provide the theoretical foundation for the Evans--function
computations reported later and justify restricting attention to a neighborhood of
$\lambda=0$.
\section{Singular limits, fast spatial scales, and comparison with surface modes}
\label{sec:singular_limits}

In the previous sections we established a robust spectral stability framework based on
Evans functions and admissible spectral domains, under the assumption that the far--field
spatial dynamics admit a uniform slow--fast splitting.
In the present section we examine singular limits of the regularization parameters and clarify
precisely when additional high--frequency spatial scales may arise.
The analysis serves two purposes: first, to confirm the robustness of the elliptically
regularized model; and second, to explain in which singular limits the classical
surface--mode mechanism of the reference problem can re--emerge.

We recall that the far--field dispersion relation is governed by a quartic polynomial
$p_4(\mu,\lambda,\varepsilon)$ whose leading high--frequency terms include both inertial
and elliptic regularization effects.
When the elliptic regularization parameter satisfies $d>0$, the quartic contribution
$-\varepsilon d\,\mu^4$ is present for all $\varepsilon>0$ and dominates the cubic term
$\varepsilon\beta\,\mu^3$ as $|\mu|\to\infty$.
As a consequence, the spatial spectrum always consists of two slow eigenvalues
$\mu=\mathcal{O}(1)$ together with a fast complex--conjugate pair satisfying
\[
|\mu|\sim(\varepsilon d)^{-1/4},
\qquad \varepsilon\to0,
\]
uniformly for $\lambda$ in compact admissible sets.
The inertial regularization term $\delta u_{xxtt}$ influences the location of the fast
pair but does not introduce an additional distinguished spatial scale.
In particular, no real surface eigenvalue of order $|\mu|\sim(\varepsilon\beta)^{-1}$
can occur when $d>0$, regardless of the size or scaling of $\delta$.

\begin{remark}[Suppression of surface modes for $d>0$]
The surface spatial scale familiar from the reference problem originates from a balance
between the cubic term $\varepsilon\beta\,\mu^3$ and lower--order contributions in the
dispersion relation.
In the presence of elliptic regularization, this balance is destroyed by the dominant
quartic term $-\varepsilon d\,\mu^4$, which prevents the emergence of a surface root.
As a result, the spatial dynamics of the elliptically regularized model exhibit a
uniform two--slow/two--fast decomposition, and the Evans function admits a corresponding
slow--fast factorization on admissible spectral domains.
\end{remark}

For comparison with earlier work, we now describe the singular regimes that arise when
elliptic regularization is absent, that is, when $d=0$.
In this case the far--field polynomial contains two competing high--frequency mechanisms:
an inertial contribution proportional to $\varepsilon\delta(\varepsilon)\mu^4$ and a
surface contribution proportional to $\varepsilon\beta\mu^3$.
Depending on the relative size of $\delta(\varepsilon)$, three distinct regimes occur.

\paragraph{Regime I: fixed $\delta>0$.}
If $\delta(\varepsilon)\to\delta_0>0$ as $\varepsilon\to0$, the inertial mechanism
dominates and produces a fast complex--conjugate pair with
$|\mu|\sim(\varepsilon\delta_0)^{-1/2}$.
The surface mechanism is suppressed, and the spatial dynamics admit a two--slow/two--fast
splitting analogous to that of the elliptically regularized model, albeit with a
different fast scaling.

\paragraph{Regime II: critical scaling $\delta(\varepsilon)\sim\varepsilon$.}
When $\delta(\varepsilon)$ scales proportionally to $\varepsilon$, the inertial and
surface mechanisms interact at leading order.
In this transitional regime the spatial spectrum may involve multiple distinguished
scales, and a refined multi--scale description is required to resolve the dynamics.

\paragraph{Regime III: $\delta(\varepsilon)\ll\varepsilon$.}
If $\delta(\varepsilon)$ decays faster than $\varepsilon$, the inertial contribution
becomes negligible at the surface scale.
The dispersion relation then recovers the cubic structure of the reference problem, and
a real surface spatial eigenvalue re--emerges with
$|\mu|\sim(\varepsilon\beta)^{-1}$.
In this regime the spatial dynamics exhibit the slow$\times$surface splitting familiar
from the surface--resistance model.
For related singularly perturbed sine--Gordon regimes and travelling-wave constructions,
see \cite{DDvGV}.

The regime classification above is based solely on the far--field dispersion relation.
Whenever the spatial spectrum splits into well--separated groups, standard
exponential--dichotomy theory yields an invariant decomposition of the Evans system and
a corresponding factorization of the Evans function.
In particular, for the elliptically regularized model ($d>0$), the fast Evans factor is
uniformly nonvanishing on admissible spectral domains, and the point spectrum is governed
entirely by the slow dynamics analyzed in the preceding sections.

This comparison clarifies precisely which singular limits lead to qualitatively new
spatial behaviour and confirms the robustness of the spectral stability mechanism in the
presence of elliptic regularization.
It also shows that the conclusions of the previous sections depend only on the existence
of a uniform spectral gap between slow and fast spatial modes, rather than on the
specific origin of the fast scale.

\begin{proposition}[Slow/fast factorization under spectral gap]
\label{prop:evans_bridge_regimes}
Fix a compact set $K\subset\mathbb C$ and assume that for all sufficiently small
$\varepsilon>0$ and all $\lambda\in K$ the far--field polynomial \eqref{eq:p4_local}
has no roots on the imaginary axis $\Re\mu=0$ and admits a uniform spectral gap
separating a slow group of roots (bounded as $\varepsilon\to0$) from a fast group of
roots (with $|\mu|\to\infty$ as $\varepsilon\to0$).
Then the linearized first--order system admits exponential dichotomies on $\mathbb R_\pm$
for $\lambda\in K$ and the Evans function factorizes as
\[
D(\lambda,\varepsilon)
=
D_{\mathrm{slow}}(\lambda,\varepsilon)\,D_{\mathrm{fast}}(\lambda,\varepsilon),
\qquad \lambda\in K,
\]
where $D_{\mathrm{fast}}(\cdot,\varepsilon)$ is analytic and nonvanishing on $K$ for
$\varepsilon$ sufficiently small.
\end{proposition}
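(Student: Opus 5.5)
The plan is to transfer the constant-coefficient spectral gap to the $\xi$--dependent system by a block diagonalization, and then read off the factorization from bases adapted to the resulting slow and fast blocks.

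\emph{Step 1: far field.} For each $\lambda\in K$ let $\Sigma_{\mathrm{slow}}^\pm(\lambda,\varepsilon)$ and $\Sigma_{\mathrm{fast}}^\pm(\lambda,\varepsilon)$ be the two root groups of $p_4$ at $\pm\infty$. By hypothesis they are uniformly separated and avoid $\Re\mu=0$. Riesz projections over contours around each group give analytic spectral projections $\Pi^\pm_{\mathrm{slow}}$, $\Pi^\pm_{\mathrm{fast}}$ of $A_\pm(\lambda,\varepsilon)$. Each group splits further into stable and unstable parts. The unperturbed problem at $\xi=\pm\infty$ has one slow root of each sign, since $q_\pm=1$ and the slow roots solve $(1-c^2)\mu^2-2c\lambda\mu+\lambda^2+1=0$. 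Consistent splitting then forces the fast group to carry one stable and one unstable root as well, on both sides.

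\emph{Step 2: rescaling and block diagonalization.} I would conjugate the variable-coefficient system by a rescaling. This is needed because the entries of $A$ are of size $\varepsilon^{-1}$ and the fast eigenvalues have size $|\mu_f|$. Define
\[
T_\varepsilon=\mathrm{diag}\bigl(1,1,|\mu_f|^{-1},|\mu_f|^{-2}\bigr)
\]
(schematically), followed by the constant eigenbasis of $A_+(\lambda,\varepsilon)$. After this change the system takes the form
\[
W'=\begin{pmatrix}B_s & \theta_{sf}(\xi)\\ \theta_{fs}(\xi) & |\mu_f|B_f\end{pmatrix}W .
\]
Here $B_s$ is $O(1)$ and close to the reduced second--order companion matrix. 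The fast block is hyperbolic with gap of order $|\mu_f|$. The couplings $\theta$ are $O(1)$ and come only from $\cos U_\varepsilon(\xi)-q_+$.

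A Riccati/Lyapunov--Perron argument then produces invariant slow and fast subbundles. One writes the fast bundle as a graph over the fast coordinates, $\{s=S(\xi)f\}$, and the slow bundle as $\{f=R(\xi)s\}$, and solves the resulting Riccati equations by contraction on $\mathbb R$. The contraction constant is $O(\|\theta\|/|\mu_f|)\to0$. This yields a $\xi$--dependent conjugacy $I+O(|\mu_f|^{-1})$ that exactly block-diagonalizes the system.

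\emph{Step 3: factorization.} Exponential dichotomies on $\mathbb R_\pm$ follow blockwise, by roughness for the slow block and by uniform hyperbolicity of $|\mu_f|B_f$ for the fast block. Now choose the Evans bases adapted to the blocks: $E^u_-=\mathrm{span}\{u_{s},u_{f}\}$ and $E^s_+=\mathrm{span}\{s_{s},s_{f}\}$. In block coordinates the determinant is block triangular up to $O(|\mu_f|^{-1})$, and the perturbation can be absorbed exactly because the splitting is invariant. Hence $D=D_{\mathrm{slow}}D_{\mathrm{fast}}\det(\text{conjugacy})$, where
\[
D_{\mathrm{slow}}=\det(u_s,s_s)\ \text{in the slow block},\qquad D_{\mathrm{fast}}=\det(u_f,s_f)\ \text{in the fast block}.
\]
The fast block is, to leading order, constant-coefficient with fixed unstable and stable eigenvectors. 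Therefore $D_{\mathrm{fast}}=\det(e^u_f,e^s_f)+O(|\mu_f|^{-1})$, which is bounded away from zero uniformly on compact $K$. Analyticity in $\lambda$ follows because the Riesz projections, the contraction solution and the chosen bases all depend analytically on $\lambda$.

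\emph{Main obstacle.} The expected difficulty is Step 2 near the resonance $\delta\lambda^2-d=0$. There, and in Regime II, $|\mu_f|$ may fail to grow uniformly in $\lambda$, so the rescaling must be chosen $\lambda$-dependently. The spectral-gap hypothesis is then exactly what guarantees a uniform lower bound on $|\mu_f|$ times the gap. I would first verify the contraction estimate using only the hypothesized gap, without the explicit scalings of Proposition~\ref{prop:slow_fast}.
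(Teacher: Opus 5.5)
Your overall architecture matches the paper's: Riesz projections for the two isolated root groups, an invariant slow/fast splitting of the $\xi$--dependent flow, Evans bases adapted to that splitting, and nonvanishing of the fast factor from hyperbolicity of the fast block. You block-diagonalize exactly with two Riccati equations, whereas the paper block-triangularizes the exterior--product flow; that difference is inessential. The genuine gap is the contraction estimate in Step~2. The linear part of your Riccati equations is a Sylvester operator such as $R\mapsto R'-|\mu_f|B_fR+RB_s$. Its inverse on bounded functions has norm comparable to $1/\min|\Re(\mu_f-\mu_s)|$, not $1/|\mu_f|$. This is sharp already for scalar constant coefficients, with forcing that oscillates at frequency $\Im(\mu_f-\mu_s)$. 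So your contraction constant is $\mathcal O(\|\theta\|/|\mu_f|)$ only if the fast roots have real parts of order $|\mu_f|$. The proposition assumes only separation in modulus together with $\Re\mu\neq0$, and the difference matters. In the inertial case $d=0$, $\delta>0$ (Regime~I, which this proposition is meant to cover), \eqref{eq:fast_scaling} gives $\mu_{\mathrm{fast},\pm}=\pm i\sqrt{1-c_\varepsilon^{2}}/\sqrt{\varepsilon\delta}+\mathcal O(1)$, hence $\Re\mu_{\mathrm{fast}}=\mathcal O(1)$. There the fixed--point map need not contract. Its linear part is not even invertible if $\Re\mu_f=\Re\mu_s$ for some $\lambda\in K$. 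So neither the invariant splitting nor the bound $D_{\mathrm{fast}}=\det(e^u_f,e^s_f)+\mathcal O(|\mu_f|^{-1})$ follows.

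Your argument does work when the fast roots are real, as in the elliptic case $\pm\sqrt{1-c_\varepsilon^{2}}/\sqrt{d\varepsilon}+\mathcal O(1)$. To prove the statement as posed, you need one of two things:
\begin{itemize}
\item the extra hypothesis $|\Re\mu_{\mathrm{fast}}|\to\infty$ uniformly on $K$, which is what the paper's one--line appeal to ``uniform hyperbolicity of the fast block'' tacitly uses; or
\item a different decoupling, e.g.\ a near--identity averaging transformation that exploits the fact that $\theta(\xi)$ varies on an $\mathcal O(1)$ scale while the fast block oscillates at frequency $|\Im\mu_f|\to\infty$.
\end{itemize}
Your ``main obstacle'' paragraph places the difficulty in the uniformity of $|\mu_f|$ in $\lambda$; the actual issue is separation of real parts.

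There is also a smaller slip in Step~1. The quadratic you quote, $(1-c^2)\mu^2-2c\lambda\mu+\lambda^2+1$, has the purely imaginary roots $\pm i/\sqrt{1-c^2}$ at $\lambda=0$. For real $\lambda$ and $|c|<1/\sqrt2$ its roots form a conjugate pair with common real part $c\lambda/(1-c^2)$. So it does not give one slow root of each sign. That count does hold for $(c^2-1)\mu^2-2c\lambda\mu+\lambda^2+1$, which is what \eqref{eq:eig_raw_S2} actually yields. In any case you do not need the count. The factorization only requires that, within each group, the number of unstable roots of $A_-$ plus the number of stable roots of $A_+$ equals the size of the group. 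This is automatic: the end states differ by $2\pi$, so $q_+=q_-$ and $A_+=A_-$.
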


\begin{proof}
This follows from standard spectral perturbation theory for isolated spectral subspaces
of the far--field matrices, which yields analytic stable/unstable bundles and an
invariant splitting into slow and fast subbundles when a uniform gap is present.
Propagation of bases adapted to this splitting and block triangularization of the
associated exterior--product flow give the factorization, while uniform hyperbolicity of
the fast block prevents vanishing of $D_{\mathrm{fast}}$ on $K$.
\end{proof}

\section{Evans function: argument principle near $\lambda=0$ and near the spectral edges $\lambda=\pm i\omega_0$}
\label{sec:evans_near_0_and_edges_beta_delta_d}

\subsection{Argument-principle setup and local Evans analysis at $\lambda=0$ and at the edges}
\label{sec:argprin_local_beta_delta_d}

In this section we isolate admissible spectral domains $\Omega$ on which the Evans function is analytic,
and we use the argument principle to relate the winding of $D_{\beta,\delta,d}$ to the number of eigenvalues.
We then carry out the two local analyses needed later: first near $\lambda=0$, and second near the
essential-spectrum edge points $\lambda=\pm i\omega_0$.
All statements concern the linearized spectral problem associated with \eqref{eq:SG_general_intr}.

Recall that the essential spectrum $\sigma_{\mathrm{ess}}(\varepsilon)$ is characterized by the far-field dispersion relation
\begin{equation}\label{eq:ess_dispersion_argprin_beta_delta_d}
p_4^{(\beta,\delta,d)}(ik,\lambda,\varepsilon)=0,\qquad k\in\mathbb{R}.
\end{equation}
Equivalently, \eqref{eq:ess_dispersion_argprin_beta_delta_d} may be written as a quadratic polynomial in $\lambda$,
\begin{equation}
\label{eq:dispersion_quadratic_lambda_beta_delta_d}
(1+\varepsilon\delta k^2)\lambda^2
+\varepsilon(\alpha+\beta k^2-2ic_\varepsilon k)\lambda
+\bigl(q_\pm-(1-c_\varepsilon^2)k^2-d\varepsilon k^4\bigr)
=0,
\end{equation}
where $p_4^{(\beta,\delta,d)}$ denotes the quartic far-field polynomial determined by the constant far-field matrices
$A_{\beta,\delta,d,\pm}(\lambda,\varepsilon)$ of the Evans system.
In the present setting, $p_4^{(\beta,\delta,d)}$ is given explicitly (cf.\ \eqref{eq:p4_local}) by
\begin{equation}\label{eq:p4_beta_delta_d}
p_4^{(\beta,\delta,d)}(\mu,\lambda,\varepsilon)
=
-d\varepsilon \mu^4
+\bigl(1-c_\varepsilon^{\,2}-\varepsilon(\beta\lambda+\delta\lambda^2)\bigr)\mu^2
-2c_\varepsilon\lambda\mu
+\lambda^2+\varepsilon\alpha\lambda+q_\pm,
\end{equation}
where $q_\pm=\cos(U_\varepsilon(\pm\infty))$.
Note that the contribution $-d\varepsilon k^4$ in \eqref{eq:dispersion_quadratic_lambda_beta_delta_d}
arises from the fourth-order term $-d\varepsilon\mu^4$ in \eqref{eq:p4_beta_delta_d}.

For $\lambda\notin\sigma_{\mathrm{ess}}(\varepsilon)$, the far-field matrices are hyperbolic and the variable-coefficient first-order
system admits exponential dichotomies on $\mathbb{R}_\pm$, so the Evans function is well-defined.

\begin{definition}[Admissible set and Evans analyticity domain]\label{def:Omega_final_beta_delta_d}
Fix an open set $\Omega\subset\mathbb{C}$. We say that $\Omega$ is \emph{admissible} (for the given $\varepsilon>0$) if
\begin{equation}\label{eq:Omega_admissible_beta_delta_d}
\Omega\cap\sigma_{\mathrm{ess}}(\varepsilon)=\emptyset,
\quad\text{equivalently}\quad
p_4^{(\beta,\delta,d)}(ik,\lambda,\varepsilon)\neq0
\ \text{for all }k\in\mathbb{R},\ \lambda\in\Omega.
\end{equation}
On an admissible $\Omega$, the Evans function $D_{\beta,\delta,d}(\lambda,\varepsilon)$ is analytic in $\lambda$.
\end{definition}

The admissible region used in the global count is built from three pieces: a neighborhood of the origin and slit neighborhoods
of the two edge points $\pm i\omega_0$. More precisely, we use an interior domain $\Omega_0$
around $\lambda=0$ and two slit neighborhoods $\Omega_\pm$ around the edge points $\lambda=\pm i\omega_0$,
constructed in a square-root chart.

\begin{proposition}[Argument principle for the Evans function]\label{prop:arg_principle_beta_delta_d}
Let $\Omega\subset\mathbb{C}$ be admissible and let $\Gamma=\partial\Omega$ be a positively oriented, piecewise $C^1$ contour
such that $D_{\beta,\delta,d}(\lambda,\varepsilon)\neq0$ for all $\lambda\in\Gamma$.
Then the number of zeros of $D_{\beta,\delta,d}(\cdot,\varepsilon)$ in $\Omega$, counted with algebraic multiplicity, is
\begin{equation}\label{eq:arg_principle_formula_beta_delta_d}
N(\Omega)
=\frac{1}{2\pi i}\int_{\Gamma}\frac{\partial_\lambda D_{\beta,\delta,d}(\lambda,\varepsilon)}{D_{\beta,\delta,d}(\lambda,\varepsilon)}\,d\lambda.
\end{equation}
Equivalently, $N(\Omega)$ is the winding number of the closed curve $D_{\beta,\delta,d}(\Gamma)$ about the origin.
\end{proposition}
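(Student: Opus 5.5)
The argument reduces Proposition~\ref{prop:arg_principle_beta_delta_d} to the classical argument principle for analytic functions. The only genuinely Evans-specific input is that zeros of $D_{\beta,\delta,d}$ and their orders match eigenvalues and their algebraic multiplicities.

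First, I verify that $D_{\beta,\delta,d}(\cdot,\varepsilon)$ is analytic on a neighborhood of $\overline{\Omega}$. By Definition~\ref{def:Omega_final_beta_delta_d}, $\Omega$ avoids $\sigma_{\mathrm{ess}}(\varepsilon)$, so the far-field matrices $A_{\beta,\delta,d,\pm}(\lambda,\varepsilon)$ are hyperbolic with the consistent splitting \eqref{eq:dichotomy_dims_S2}. Section~\ref{sec:Evans} then gives exponential dichotomies on $\mathbb R_\pm$ and analytic bases of $E_-^u$ and $E_+^s$, so \eqref{eq:Evans_det_def_S3} is analytic. Since $\Gamma=\partial\Omega$ lies on the boundary, I need this on an open set containing $\overline\Omega$. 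If $\Gamma$ is compact and disjoint from the closed set $\sigma_{\mathrm{ess}}(\varepsilon)$, hyperbolicity is an open condition, so a slightly larger admissible open set exists. For the slit neighborhoods $\Omega_\pm$ near $\pm i\omega_0$, where $\Gamma$ may run along the slit, I work instead in the square-root chart and apply the argument there. This boundary issue is the main obstacle. I handle it by requiring $\Gamma\Subset$ a region of analyticity, and in the lifted variable $\kappa$ with $\lambda=\pm i\omega_0+\kappa^2$, I use the analyticity of the lifted Evans function, counting zeros in the $\kappa$-image of $\Omega$.

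Second, since $D_{\beta,\delta,d}\neq0$ on the compact set $\Gamma$ and is analytic, it has only finitely many zeros in $\Omega$. Otherwise they would accumulate at some point of $\overline\Omega$, forcing $D\equiv0$ on a component, which contradicts the nonvanishing on $\Gamma$. The logarithmic derivative $\partial_\lambda D/D$ is therefore meromorphic, with simple poles at the zeros $\lambda_j$ and residues equal to the orders $m_j$. The residue theorem gives $\frac{1}{2\pi i}\int_\Gamma \partial_\lambda D/D\,d\lambda=\sum_j m_j$.

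Third, I identify $m_j$ with algebraic multiplicity using the statement after \eqref{eq:zeros_eigs_S3}: zeros of the Evans function in $\Omega$ coincide with eigenvalues, and the order of a zero equals the algebraic multiplicity. This is the standard Alexander--Gardner--Jones result, which I would cite rather than reprove. The result is also independent of the choice of bases, since a change of basis multiplies $D$ by a nonvanishing analytic factor $g$, and $\int_\Gamma g'/g=0$.

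Finally, the winding-number formulation follows from the substitution $w=D_{\beta,\delta,d}(\lambda,\varepsilon)$:
\[
\frac{1}{2\pi i}\int_\Gamma\frac{\partial_\lambda D}{D}\,d\lambda=\frac{1}{2\pi i}\oint_{D(\Gamma)}\frac{dw}{w},
\]
which is the index of the closed curve $D(\Gamma)$ about $0$. This is well defined because $0\notin D(\Gamma)$.
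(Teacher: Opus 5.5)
Your proposal is correct and takes the same route as the paper. Both arguments rest on analyticity of $D_{\beta,\delta,d}$ on the admissible set, on the logarithmic derivative having simple poles whose residues are the zero orders (identified with algebraic multiplicities via the Evans-function characterization of Section~\ref{sec:Evans}), and on the residue theorem. Your extra care is a legitimate tightening of the same argument, not a different one: you secure analyticity on a neighborhood of $\overline{\Omega}$ so that $\Gamma=\partial\Omega$ is a valid contour, and you pass to the square-root chart when $\Gamma$ runs along a slit, both of which the paper's proof passes over silently.
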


\begin{proof}
On an admissible $\Omega$, the Evans function is analytic and its zeros coincide with point-spectrum eigenvalues.
Since $D_{\beta,\delta,d}$ is nonzero on $\Gamma$, the meromorphic function
$\partial_\lambda D_{\beta,\delta,d}/D_{\beta,\delta,d}$
has simple poles at zeros of $D_{\beta,\delta,d}$ with residues equal to their algebraic multiplicities.
Cauchy's residue theorem yields \eqref{eq:arg_principle_formula_beta_delta_d}.
\end{proof}

We now specify the local admissible pieces $\Omega_0$ and $\Omega_\pm$ and establish the
two local spectral statements used later when applying Proposition~\ref{prop:arg_principle_beta_delta_d}.

Fix small parameters $\rho_0>0$ and $\eta_0>0$. Define the low-frequency domain
\begin{equation}\label{eq:Omega0_def_beta_delta_d}
\Omega_0:=\{\lambda\in\mathbb{C}:\ |\lambda|<\rho_0,\ \Re\lambda>-\eta_0\}\setminus\sigma_{\mathrm{ess}}(\varepsilon).
\end{equation}
For $\varepsilon>0$ sufficiently small, $\Omega_0$ is admissible and contains the relevant neighborhood of $\lambda=0$
in the closed right half-plane.

\begin{proposition}[Slow--fast splitting and Evans factorization near $\lambda=0$]
\label{prop:factor_near0_final_beta_delta_d}
Assume $d>0$ and $\delta\ge0$ are fixed independent of $\varepsilon$, and allow $\beta\ge0$.
Let $K\Subset\Omega_0$ be compact. Then there exists $\varepsilon_0>0$ such that for all $0<\varepsilon<\varepsilon_0$
and all $\lambda\in K$:
\begin{enumerate}
\item the far-field polynomial \eqref{eq:p4_beta_delta_d} has two slow roots
$\mu_{\mathrm{sl},\pm}(\lambda,\varepsilon)=\mathcal{O}(1)$ and a fast pair
$\mu_{\mathrm{fa},\pm}(\lambda,\varepsilon)$ satisfying
\begin{equation}\label{eq:fast_asymptotics_final_beta_delta_d}
\mu_{\mathrm{fa},\pm}(\lambda,\varepsilon)
=
\pm \frac{\sqrt{1-c_\varepsilon^{\,2}}}{\sqrt{d\,\varepsilon}}
+\mathcal{O}(1),
\qquad \varepsilon\to0,
\end{equation}
uniformly for $\lambda\in K$;

\item the Evans function admits an analytic factorization on $K$,
\begin{equation}\label{eq:Evans_factor_near0_final_beta_delta_d}
D_{\beta,\delta,d}(\lambda,\varepsilon)
=
D_{\mathrm{slow}}(\lambda,\varepsilon)\,D_{\mathrm{fast}}(\lambda,\varepsilon),
\qquad \lambda\in K,
\end{equation}
where $D_{\mathrm{fast}}(\cdot,\varepsilon)$ is analytic and nonvanishing on $K$ for $\varepsilon$ sufficiently small.
\end{enumerate}
Consequently, zeros of $D_{\beta,\delta,d}(\cdot,\varepsilon)$ in $K$ coincide, with algebraic multiplicity, with zeros of
$D_{\mathrm{slow}}(\cdot,\varepsilon)$.
\end{proposition}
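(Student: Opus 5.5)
The plan is to treat the far-field polynomial \eqref{eq:p4_beta_delta_d} as a singularly perturbed quartic in $\mu$ and to separate its roots by two rescalings. We work uniformly for $\lambda\in K$.

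\emph{Slow roots.} On $|\mu|\le R$ we write
\[
p_4^{(\beta,\delta,d)}=p_2(\mu,\lambda)+\varepsilon r(\mu,\lambda,\varepsilon),
\qquad
p_2(\mu,\lambda)=(1-c_0^2)\mu^2-2c_0\lambda\mu+\lambda^2+q_\pm ,
\]
where $c_0=\lim_{\varepsilon\to0}c_\varepsilon$, $|c_0|<1$, and $r$ is bounded on $\{|\mu|\le R\}\times K$. Since $1-c_0^2>0$, the quadratic $p_2$ has exactly two roots, and they are bounded uniformly on $K$. Choose $R$ larger than these roots, so that $p_2\neq0$ on $|\mu|=R$. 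Rouché's theorem then gives exactly two roots $\mu_{\mathrm{sl},\pm}$ of $p_4$ in $|\mu|<R$. These are $\mathcal O(1)$ and converge to the reference spatial eigenvalues.

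\emph{Fast roots.} Rescale $\mu=\varepsilon^{-1/2}\nu$, so that
\[
\varepsilon\,p_4=-d\nu^4+(1-c_\varepsilon^2)\nu^2+\mathcal O(\varepsilon^{1/2}),
\]
uniformly for $\nu$ in compacts and $\lambda\in K$. The $\delta$ and $\beta$ contributions enter only at order $\varepsilon\cdot\varepsilon^{-1}\cdot\varepsilon=\varepsilon$ relative to the leading terms, because $\lambda$ is bounded. This is where $d>0$ fixed is used: the $\mu^4$ term is genuinely present. The nonzero roots $\nu=\pm\sqrt{(1-c^2)/d}$ are simple. The implicit function theorem, or Rouché on small discs, gives $\nu_\pm=\pm\sqrt{(1-c_\varepsilon^2)/d}+\mathcal O(\varepsilon^{1/2})$, which is \eqref{eq:fast_asymptotics_final_beta_delta_d}. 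Counting degrees, the two slow and two fast roots exhaust the quartic. Moreover, $\Re\mu_{\mathrm{fa},+}\sim+\varepsilon^{-1/2}$ and $\Re\mu_{\mathrm{fa},-}\sim-\varepsilon^{-1/2}$, so the fast pair is uniformly hyperbolic with one stable and one unstable direction. Since $K$ is admissible, the slow pair also has no imaginary roots. Consistent splitting then forces exactly one slow root on each side, which matches \eqref{eq:dichotomy_dims_S2}.

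\emph{Factorization.} We rescale the Evans system \eqref{eq:EvansSystem_S2} using $\tilde Y=\operatorname{diag}(1,1,\varepsilon^{1/2},\varepsilon)\,Y$ together with the fast variable on the fast block. We then perform a block-diagonalizing change of variables $Y=T(\xi,\lambda,\varepsilon)Z$, with $T$ analytic in $\lambda$. We seek $T=I+\mathcal O(\varepsilon^{1/2})$, obtained from the standard Riccati/graph-transform construction of invariant slow and fast bundles. This construction requires the gap $|\mu_{\mathrm{fa}}|-|\mu_{\mathrm{sl}}|\gtrsim\varepsilon^{-1/2}$, which dominates both the $\xi$-dependence of $A$ and the exponentially small far-field remainder. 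In the $Z$ variables the system decouples, up to exponentially small terms removable by roughness, into a $2\times2$ slow block and a $2\times2$ fast block. The slow block is an $\mathcal O(\varepsilon^{1/2})$ perturbation of the second-order reference problem. The fast block is, to leading order, a constant-coefficient hyperbolic system. We choose Evans bases adapted to this splitting: slow vectors $u_1,s_1$ and fast vectors $u_2,s_2$. The determinant \eqref{eq:Evans_det_def_S3} then equals $\det T$ times a block determinant. This gives $D_{\mathrm{slow}}=\det(u_1,s_1)$ in the slow block and $D_{\mathrm{fast}}=\det(u_2,s_2)$ in the fast block. $D_{\mathrm{fast}}$ is analytic and nonzero because the fast unstable and stable directions are eigenvectors for eigenvalues of opposite sign and are uniformly transverse. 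Together with $\det T\ne0$, this shows that zeros of $D$ on $K$ are those of $D_{\mathrm{slow}}$, with multiplicity.

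The main obstacle is the rigorous block diagonalization uniformly in $\xi$ and analytically in $\lambda$. In particular, one must control the coupling terms between the slow and fast blocks, which carry the factor $\varepsilon^{-1}\cos U_\varepsilon(\xi)$. I would first try a Fenichel-type invariant-manifold argument for the projectivized linear flow, using the gap of size $\varepsilon^{-1/2}$ and contraction in a weighted sup norm. If the singular coefficients obstruct this, I would fall back on a Levinson/Coppel-type diagonalization with a conjugation $T$ solving $T'=AT-TB$ perturbatively.
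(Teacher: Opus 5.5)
Your proposal is correct and follows essentially the same route as the paper. It separates the roots of $p_4^{(\beta,\delta,d)}$ using the $\varepsilon=0$ quadratic and the rescaling $\mu=\varepsilon^{-1/2}\nu$, uses the resulting gap to split the dichotomies into slow and fast parts with adapted Evans bases, and gets $D_{\mathrm{fast}}\neq0$ from uniform transversality of the fast pair. You supply more than the paper does: Rouch\'e and the implicit function theorem for the roots, and an explicit conjugation $T=I+\mathcal{O}(\varepsilon^{1/2})$, which is what actually makes the determinant factor. The obstacle you flag is also milder than you fear: after your scaling $\mathrm{diag}(1,1,\varepsilon^{1/2},\varepsilon)$ the $\varepsilon^{-1/2}$ part of the matrix is constant in $\xi$ and $\cos U_\varepsilon(\xi)$ enters only at $\mathcal{O}(1)$, so a standard Chang/Riccati decoupling applies.
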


\begin{proof}
\emph{Step 1: root splitting.}
Fix $\lambda\in K$. The slow roots follow by setting $\varepsilon=0$ in \eqref{eq:p4_beta_delta_d}, which yields a quadratic
dispersion relation; hence two spatial roots remain bounded and converge to the reduced problem as $\varepsilon\to0$.

To capture the fast roots, set $\mu=\varepsilon^{-1/2}\nu$ in \eqref{eq:p4_beta_delta_d}. Then
\[
p_4^{(\beta,\delta,d)}(\varepsilon^{-1/2}\nu,\lambda,\varepsilon)
=
\varepsilon^{-1}\Bigl(-d\nu^4+(1-c_\varepsilon^{\,2})\nu^2\Bigr)
+\mathcal{O}(\varepsilon^{-1/2}),
\qquad \varepsilon\to0,
\]
uniformly for $\lambda\in K$.
Solving the leading-order equation $-d\nu^4+(1-c_\varepsilon^{\,2})\nu^2=0$ yields
$\nu=\pm \sqrt{1-c_\varepsilon^{\,2}}/\sqrt{d}$, while the double root $\nu=0$ corresponds to the slow group.
Undoing the scaling gives \eqref{eq:fast_asymptotics_final_beta_delta_d}.
The separation between the slow and fast groups provides a uniform spectral gap on $K$.

\emph{Step 2: invariant splitting and factorization.}
For $\lambda\in K\Subset\Omega_0$ the far-field matrices are hyperbolic and have two-dimensional stable and unstable subspaces.
By the spectral gap from Step~1, the spectral projectors onto the slow and fast parts of these subspaces depend analytically on
$\lambda$. Transporting these subbundles by the variable-coefficient flow yields slow and fast exponential dichotomies on $\mathbb{R}_\pm$.
Choosing Evans bases adapted to this splitting and forming the Evans determinant yields
\eqref{eq:Evans_factor_near0_final_beta_delta_d}; cf.\ \cite{AlexanderGardnerJones,BridgesDerks,SandstedeScheel}.

\emph{Step 3: nonvanishing of the fast factor.}
The fast subspaces correspond to the large spatial rates \eqref{eq:fast_asymptotics_final_beta_delta_d}, hence are uniformly hyperbolic
and uniformly transverse on $K$ for $\varepsilon$ sufficiently small. Therefore the determinant defining
$D_{\mathrm{fast}}$ cannot vanish on $K$.
\end{proof}

\begin{proposition}[Local spectral equivalence near $\lambda=0$]\label{prop:local_equiv_near0_beta_delta_d}
Assume $\alpha>0$, allow $\beta\ge0$, and take $\delta\ge0$ and $d>0$ fixed independent of $\varepsilon$.
Then for $\varepsilon>0$ sufficiently small, the only point spectrum of \eqref{eq:SG_general_intr} in $\Omega_0$ consists of:
\begin{enumerate}
\item the translational eigenvalue $\lambda_1(\varepsilon)=0$;
\item a real eigenvalue $\lambda_2(\varepsilon)$ satisfying
\begin{equation}\label{eq:lambda2_final_near0_beta_delta_d}
\lambda_2(\varepsilon)
=
-\varepsilon\left(\alpha+\beta\,\frac{1+2c_0^2}{3(1-c_0^2)}\right)+\mathcal{O}(\varepsilon^2),
\qquad \varepsilon\to0,
\end{equation}
and in particular $\lambda_2(\varepsilon)<0$ for $\varepsilon$ small.
\end{enumerate}
Moreover, no eigenvalues lie in $\Omega_0\cap\{\Re\lambda>0\}$.
\end{proposition}

\begin{proof}
By Proposition~\ref{prop:factor_near0_final_beta_delta_d}, zeros of $D_{\beta,\delta,d}$ in compact subsets of $\Omega_0$
coincide with zeros of $D_{\mathrm{slow}}$.
For eigenvalues with $|\lambda|=\mathcal{O}(\varepsilon)$, the contributions of the $\delta$-- and $d$--terms to the slow reduced
problem enter only through higher-order couplings to the fast variables and therefore affect the slow Evans factor only at order
$\mathcal{O}(\varepsilon^2)$ in the near-zero regime.
In contrast, the damping term $\alpha$ and the $\beta$--term contribute at order $\mathcal{O}(\varepsilon)$ to the slow reduced dynamics,
exactly as in the reference problem with $\delta=d=0$.
Hence $D_{\mathrm{slow}}(\lambda,\varepsilon)$ agrees with the reference slow Evans function up to an analytic nonvanishing factor and an
$\mathcal{O}(\varepsilon^2)$ remainder on small neighborhoods of $\lambda=0$.
The reference analysis yields precisely the two small eigenvalues $\lambda_1=0$ and $\lambda_2$ with expansion
\eqref{eq:lambda2_final_near0_beta_delta_d} and no zeros in $\Re\lambda>0$ for $\alpha>0$.
Analytic perturbation, for example via Rouch\'e's theorem or Weierstrass preparation on small circles around each root,
transfers these statements to $D_{\mathrm{slow}}$ and hence to $D_{\beta,\delta,d}$ for $\varepsilon$ sufficiently small. 
\end{proof}

Let $c_0:=\lim_{\varepsilon\to0}c_\varepsilon$ and set
\begin{equation}\label{eq:omega0_def_final_beta_delta_d}
\omega_0:=\sqrt{1-c_0^2}.
\end{equation}
For $\varepsilon=0$, the unperturbed Evans function has square-root behavior at the edge points
$\lambda=\pm i\omega_0$, corresponding to branch points of the dispersion relation at $k=0$.
To isolate admissible neighborhoods of these points for $\varepsilon>0$, we use the standard square-root chart.

Fix $\lambda_+:=+i\omega_0$ and introduce
\begin{equation}\label{eq:zeta_chart_final_beta_delta_d}
\lambda=\lambda_+ + \zeta^2,\qquad \zeta=\sqrt{\lambda-\lambda_+},
\end{equation}
with a branch cut chosen so that $\zeta$ is analytic off the slit.
Define the corresponding edge neighborhood
\begin{equation}\label{eq:Omegap_def_beta_delta_d}
\Omega_+:=\{\lambda=\lambda_+ + \zeta^2:\ |\zeta|<\rho_+,\ \zeta\ \text{off the slit}\}\setminus\sigma_{\mathrm{ess}}(\varepsilon),
\end{equation}
and similarly $\Omega_-$ around $\lambda_-:=-i\omega_0$.

\begin{lemma}[Edge points are independent of $\beta,\delta,d$]\label{lem:edge_points_indep_beta_delta_d}
For the far-field polynomial \eqref{eq:p4_beta_delta_d} one has
\[
p_4^{(\beta,\delta,d)}(0,\lambda,\varepsilon)=\lambda^2+\varepsilon\alpha\lambda+q_\pm,
\]
which is independent of $\beta$, $\delta$, and $d$.
In particular, the edge points arising at $k=0$ converge to $\lambda=\pm i\omega_0$ as $\varepsilon\to0$.
\end{lemma}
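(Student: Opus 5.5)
Setting $\mu=0$ in \eqref{eq:p4_beta_delta_d} eliminates every term carrying a positive power of $\mu$. These are $-d\varepsilon\mu^4$, the $\mu^2$ term with coefficient $1-c_\varepsilon^{2}-\varepsilon(\beta\lambda+\delta\lambda^2)$, and $-2c_\varepsilon\lambda\mu$. What remains is $\lambda^2+\varepsilon\alpha\lambda+q_\pm$. Since $\beta,\delta,d$ enter $p_4^{(\beta,\delta,d)}$ only through the $\mu^4$ and $\mu^2$ coefficients, the first claim follows at once. As a cross-check, we can instead set $k=0$ in the quadratic form \eqref{eq:dispersion_quadratic_lambda_beta_delta_d}. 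The factor $1+\varepsilon\delta k^2$ reduces to $1$, the middle coefficient to $\varepsilon\alpha$, and the constant to $q_\pm$, which gives the same polynomial.

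For the second claim we identify the edge points as the roots in $\lambda$ of this $k=0$ dispersion relation:
\[
\lambda_{\pm}^{\mathrm{edge}}(\varepsilon)=\frac{-\varepsilon\alpha\pm\sqrt{\varepsilon^2\alpha^2-4q_\pm}}{2}.
\]
We then need $q_\pm$. The heteroclinic $U_\varepsilon$ connects zeros of $\sin$ that perturb the $0\to2\pi$ kink, so its end states satisfy $\sin U_\varepsilon(\pm\infty)=\varepsilon\gamma$ at leading order; this is the bias balance in \eqref{eq:TW_general}. For the stable rest states this gives $q_\pm=\sqrt{1-\varepsilon^2\gamma^2}=1+\mathcal O(\varepsilon^2)$, and in the unbiased case exactly $q_\pm=1$. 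In either case the roots are $\pm i+\mathcal O(\varepsilon)$, and by continuity of polynomial roots in the coefficients they converge to $\pm i\sqrt{q_\pm(0)}$.

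The main obstacle is matching this limit with $\omega_0=\sqrt{1-c_0^2}$ as defined in \eqref{eq:omega0_def_final_beta_delta_d}. The $k=0$ computation yields the limit $\pm i\sqrt{q_\pm}=\pm i$, whereas $\omega_0=\sqrt{1-c_0^2}$ agrees with $1$ only when $c_0=0$, which holds for instance when $\gamma$ is $\mathcal O(\varepsilon)$-balanced. So the convergence to $\pm i\omega_0$ needs an interpretation. One option is to read the edge in a comoving or renormalized frequency. The other is to locate the extremal point of the branch $\lambda(k)$ of \eqref{eq:dispersion_quadratic_lambda_beta_delta_d} at $\varepsilon=0$. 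That branch is $\lambda^2=(1-c_0^2)k^2-q$, but note that $-2ic_\varepsilon k\lambda$ carries an $\varepsilon$ in the paper's form, so at $\varepsilon=0$ it drops out. I would settle this by fixing the normalization convention for $q_\pm$ in terms of the Lorentz factor.

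What can be proved robustly, independently of that convention, is the following. The $k=0$ edge points are the roots of a polynomial independent of $\beta,\delta,d$, and their $\varepsilon\to0$ limit is the unperturbed edge $\pm i\sqrt{q_\pm(0)}$ for every choice of the regularization parameters. That limit is then identified with the paper's $\pm i\omega_0$.
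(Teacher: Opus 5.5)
Your first claim is handled exactly as in the paper: setting $\mu=0$ in \eqref{eq:p4_beta_delta_d} removes every term containing $\beta$, $\delta$ or $d$. The gap is in the ``in particular'' clause. You correctly find that the roots of $\lambda^2+\varepsilon\alpha\lambda+q_\pm$ tend to $\pm i\sqrt{q_\pm(0)}=\pm i$. (In fact $\sin U_\varepsilon(\pm\infty)=\varepsilon\gamma$ holds exactly, so $q_\pm=\sqrt{1-\varepsilon^2\gamma^2}$.) You also correctly notice that this limit equals $\pm i\omega_0$ only when $c_0=0$. But your last sentence then simply asserts the identification with $\pm i\omega_0$, and neither proposed repair can supply it:
\begin{itemize}
\item There is no normalization freedom in $q_\pm=\cos U_\varepsilon(\pm\infty)$.
\item The extremal-point route relies on the factor $\varepsilon$ in front of $-2ic_\varepsilon k$ in \eqref{eq:dispersion_quadratic_lambda_beta_delta_d}. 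That factor is a misprint: substituting $\mu=ik$ into \eqref{eq:p4_beta_delta_d} gives $-2ic_\varepsilon k\lambda$ with no $\varepsilon$. Even granting it, you get $\pm i$ again.
\end{itemize}
The paper's own justification (``follows from $q_\pm\to1$ and $c_\varepsilon\to c_0$'') has the same hole, because $c_\varepsilon$ does not occur in the $k=0$ polynomial at all. The mismatch you found is real, not a matter of convention.

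The point $\pm i\omega_0$ is a genuine edge, but it does not come from $k=0$. Edges are branch points, i.e.\ double roots in $\mu$. Take the unperturbed comoving problem read off from \eqref{eq:eig_raw_S2}, with symbol $P(\mu,\lambda)=\lambda^2-2c_0\lambda\mu-(1-c_0^2)\mu^2+1$. (The passage to \eqref{eq:spectral4_S2} flips the signs of the $(1-c_\varepsilon^2)\hat v''$ and $d\varepsilon\hat v''''$ terms.) The conditions $P=\partial_\mu P=0$ give $\mu=-c_0\lambda/(1-c_0^2)$ and $\lambda^2=-(1-c_0^2)$, so $\lambda=\pm i\omega_0$ at $\mu=\mp ic_0/\sqrt{1-c_0^2}$. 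Equivalently, the continuum $\lambda=ic_0k\pm i\sqrt{1+k^2}$ reaches $\pm i\omega_0$ at $k_*=\mp c_0/\sqrt{1-c_0^2}$, and $\pm i$ lies strictly inside it when $c_0\neq0$.

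Your ``renormalized frequency'' idea, made precise, is the Lorentz boost to the kink's rest frame. It explains $\omega_0$ by time dilation. However, it sends the rest-frame point $\mu'=0$ to exactly this $\mu_*\neq0$ in the frame of \eqref{eq:p4_beta_delta_d}. At $\mu_*\neq0$ the terms $\varepsilon(\beta\lambda+\delta\lambda^2)\mu^2$ and $\varepsilon d\mu^4$ do not vanish, so the $\mathcal{O}(\varepsilon)$ position of the true edge does depend on $\beta,\delta,d$.

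What you can actually prove is this: the $k=0$ points are independent of $\beta,\delta,d$ and converge to $\pm i$, and they coincide with the edges $\pm i\omega_0$ exactly when $c_0=0$. By \eqref{eq:Melnikov_final}, $c_0=0$ requires $2\pi\gamma+8d/3=0$; it is not a smallness condition on $\gamma$. A correct proof must either add that hypothesis or restate the clause, rather than identify $\pm i$ with $\pm i\omega_0$.
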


\begin{proof}
Setting $\mu=0$ in \eqref{eq:p4_beta_delta_d} eliminates all $\mu$-dependent terms, including those containing $\beta$, $\delta$, and $d$,
leaving $\lambda^2+\varepsilon\alpha\lambda+q_\pm$. The convergence to $\pm i\omega_0$ follows from $q_\pm\to1$ and $c_\varepsilon\to c_0$
as $\varepsilon\to0$.
\end{proof}

\begin{proposition}[No bifurcation from the edges for $\beta\ge0$ and fixed $\delta,d$]\label{prop:no_edge_bif_final_beta_delta_d}
Assume $\alpha>0$, take $\delta\ge0$ and $d>0$ fixed independent of $\varepsilon$, and allow $\beta\ge0$.
There exist $\rho_\pm>0$ and $\varepsilon_0>0$ such that for all $0<\varepsilon<\varepsilon_0$:
\begin{enumerate}
\item the slow Evans factor admits an analytic continuation in the $\zeta$-chart near each edge point and satisfies, for $|\zeta|<\rho_\pm$,
\begin{equation}\label{eq:Dslow_edge_exp_final_beta_delta_d}
D_{\mathrm{slow}}(\lambda_\pm+\zeta^2,\varepsilon)=C_\pm \zeta + E_\pm(\zeta,\varepsilon),
\end{equation}
where $C_\pm\neq0$ are the unperturbed edge constants and
\begin{equation}\label{eq:edge_remainder_bound_final_beta_delta_d}
\sup_{|\zeta|=\rho_\pm}|E_\pm(\zeta,\varepsilon)|\le \tfrac12|C_\pm|\rho_\pm;
\end{equation}
\item consequently, $D_{\beta,\delta,d}(\cdot,\varepsilon)$ has no zeros in $\Omega_\pm\cap\{\Re\lambda>0\}$, i.e. no point spectrum
bifurcates from $\lambda=\pm i\omega_0$ into the unstable half-plane as $\varepsilon\to0$.
\end{enumerate}
\end{proposition}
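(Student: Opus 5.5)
The plan is to work entirely in the square-root chart $\lambda=\lambda_\pm+\zeta^2$ and to reduce the statement to a Rouché argument for the slow Evans factor on the circle $|\zeta|=\rho_\pm$. I treat $\lambda_+=+i\omega_0$; the case $\lambda_-$ follows by complex conjugation, since the spectral ODE \eqref{eq:spectral4_S2} has real coefficients and hence $D_{\beta,\delta,d}(\bar\lambda,\varepsilon)=\overline{D_{\beta,\delta,d}(\lambda,\varepsilon)}$ for a suitably normalized choice of bases.

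\emph{Step 1: fast factor near the edge.} For $\lambda$ near $\pm i\omega_0$ we have $\delta\lambda^2-d\approx-\delta\omega_0^2-d<0$, so \eqref{eq:Omega_nonresonant} holds there. The rescaling $\mu=\varepsilon^{-1/2}\nu$ used in the proof of Proposition~\ref{prop:factor_near0_final_beta_delta_d} goes through verbatim on a fixed disk around $\lambda_+$, since nothing in it used smallness of $\lambda$. This gives a fast pair at $\pm\sqrt{1-c_\varepsilon^2}/\sqrt{d\varepsilon}+\mathcal O(1)$, uniformly separated from the slow pair. Proposition~\ref{prop:evans_bridge_regimes} then gives $D_{\beta,\delta,d}=D_{\mathrm{slow}}D_{\mathrm{fast}}$, with $D_{\mathrm{fast}}$ analytic and nonvanishing. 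Because the fast roots stay uniformly off the imaginary axis, the fast projections are analytic in $\lambda$ itself, so only the slow factor carries the branch point.

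\emph{Step 2: analytic continuation of $D_{\mathrm{slow}}$ in $\zeta$.} The two slow roots solve a perturbed quadratic. Its discriminant vanishes at the branch point $\lambda_{\mathrm{br}}(\varepsilon)$, which by Lemma~\ref{lem:edge_points_indep_beta_delta_d} is an $\mathcal O(\varepsilon)$ perturbation of $\lambda_+$. Near this point the slow roots are analytic functions of $\sqrt{\lambda-\lambda_{\mathrm{br}}}$, and hence analytic in $\zeta$ after a small shift of the chart center, which can be absorbed into $E_\pm$. The slow stable and unstable eigenvectors, chosen as analytic functions of $\zeta$, continue across the slit. By the gap lemma (the coefficients converge to $A_\pm$ exponentially fast), the Jost solutions built from them are analytic in $\zeta$ on a full disk $|\zeta|<\rho_+$. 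This yields an analytic $D_{\mathrm{slow}}(\lambda_++\zeta^2,\varepsilon)$, jointly continuous in $\varepsilon$.

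\emph{Step 3: expansion and Rouché.} At $\varepsilon=0$ the reduced second-order sine--Gordon problem has the known edge behaviour $D_{\mathrm{slow}}(\lambda_++\zeta^2,0)=C_+\zeta+\mathcal O(\zeta^2)$ with $C_+\neq0$: the edge is a simple square-root point and not an embedded eigenvalue or resonance. I take this as the unperturbed edge constant. First choose $\rho_+$ so small that the $\mathcal O(\zeta^2)$ term is at most $\tfrac14|C_+|\rho_+$ on $|\zeta|=\rho_+$. Then choose $\varepsilon_0$ so that, by uniform convergence on the compact circle from Step 2, the $\varepsilon$-dependent difference is also at most $\tfrac14|C_+|\rho_+$; this gives \eqref{eq:edge_remainder_bound_final_beta_delta_d}. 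Rouché's theorem then shows that $D_{\mathrm{slow}}$ has exactly one zero $\zeta_*(\varepsilon)$ in $|\zeta|<\rho_+$, with $\zeta_*=\mathcal O(\varepsilon)$.

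To conclude part (2), I must show that $\lambda=\lambda_++\zeta_*^2$ is not an eigenvalue in $\{\Re\lambda>0\}$. A zero of $D_{\mathrm{slow}}$ in $\zeta$ corresponds to an eigenvalue only if $\zeta_*$ lies on the physical sheet, that is, only if the slow roots have the correct sign of real part. The main obstacle is locating $\zeta_*$. I would first compute its leading term from the $\mathcal O(\varepsilon)$ correction, using the damping shift $\lambda^2+\varepsilon\alpha\lambda$ in $p_4(0,\cdot)$. The hope is that, since $\alpha>0$, either $\zeta_*$ sits on the unphysical sheet (a resonance), or $\Re(\lambda_++\zeta_*^2)\approx-\varepsilon\alpha/2<0$. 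In either case no zero of $D_{\beta,\delta,d}$ lies in $\Omega_+\cap\{\Re\lambda>0\}$.
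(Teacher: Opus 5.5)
\emph{There is a genuine gap: part (2) is not proved.} Your Steps 1--3 follow the paper's proof. You show the fast factor is nonvanishing near the edge, compare the slow factor in the square-root chart with $D_0(\lambda_++\zeta^2)=C_+\zeta+\mathcal O(\zeta^2)$, and apply Rouch\'e on $|\zeta|=\rho_+$. Choosing $\rho_+$ first and then $\varepsilon_0$, as you do, is the right order.

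Rouch\'e only tells you there is exactly one zero $\zeta_*(\varepsilon)$ in the disk, and your last paragraph leaves its location as a hope. That location is the whole content of (2), and three points in your write-up stand in the way.

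\begin{itemize}
\item \textbf{The edge is a resonance.} Contrary to your remark in Step~3, the linear vanishing of $D_0$ at $\zeta=0$ means the unperturbed edge \emph{is} a simple threshold resonance. At $\varepsilon=0$, substitute $\hat v=e^{-c_0\lambda\xi/(1-c_0^2)}w(s)$ with $s=\xi/\sqrt{1-c_0^2}$. This turns \eqref{eq:eig_raw_S2} into $-w_{ss}-2\sech^2(s)\,w=-\bigl(1+\lambda^2/(1-c_0^2)\bigr)w$, which at $\lambda=\pm i\omega_0$ has the bounded solution $w=\tanh s$. So $\zeta_*$ is the continuation of this resonance. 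Whether it lands on the physical sheet inside $\{\Re\lambda>0\}$ is an edge-bifurcation question that the zero count cannot decide. The paper's proof does not decide it either; it simply declares the zero ``inherited from the branch-point structure''.
\item \textbf{Your two outcomes are inconsistent in your chart.} With the fixed centre $\lambda_+=i\omega_0$, $\zeta_*=\mathcal O(\varepsilon)$ forces $\Re(\lambda_++\zeta_*^2)=\mathcal O(\varepsilon^2)$, not $\approx-\varepsilon\alpha/2$.
\item \textbf{The fixed chart is not analytic.} In that chart $D_{\mathrm{slow}}(\lambda_++\zeta^2,\varepsilon)$ is not analytic on the disk. The displaced branch point $\lambda_{\mathrm{br}}(\varepsilon)\neq\lambda_+$ creates branch points at $\zeta=\pm(\lambda_{\mathrm{br}}-\lambda_+)^{1/2}=\mathcal O(\varepsilon^{1/2})$. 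So the shift cannot be ``absorbed into $E_\pm$'', and (1) has to be read and proved in a recentred chart. The paper's Step~2 glosses over the same point.
\end{itemize}

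To close the argument, recentre at the double root of the slow pair, $\lambda=\lambda_{\mathrm{br}}(\varepsilon)+\tilde\zeta^2$ (cf.\ the $\varepsilon$-dependent centre $\lambda_+^\varepsilon$ in Section~\ref{sec7_3}). The estimate $\lambda_{\mathrm{br}}(\varepsilon)-\lambda_+=\mathcal O(\varepsilon)$ comes from the simplicity of the unperturbed discriminant zero. It does not come from Lemma~\ref{lem:edge_points_indep_beta_delta_d}, which concerns $\mu=0$; for $c_0\neq0$ the slow roots collide at $\mu\neq0$. Then two steps remain.

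\begin{itemize}
\item \textbf{(i) An $\mathcal O(\varepsilon)$ remainder.} Prove $D_{\mathrm{slow}}(\lambda_{\mathrm{br}}(\varepsilon)+\tilde\zeta^2,\varepsilon)=D_0(\lambda_++\tilde\zeta^2)+\mathcal O(\varepsilon)$ uniformly on $|\tilde\zeta|\le\rho_+$. The joint continuity in $\varepsilon$ you claim is not enough: an $o(1)$ or $\mathcal O(\varepsilon^{1/2})$ error would let $\tilde\zeta_*^2$ be as large as the damping shift. With the $\mathcal O(\varepsilon)$ bound, Rouch\'e gives $\tilde\zeta_*=\mathcal O(\varepsilon)$, so the zero sits at $\lambda_*=\lambda_{\mathrm{br}}(\varepsilon)+\mathcal O(\varepsilon^2)$.
\item \textbf{(ii) The sign of $\Re\lambda_{\mathrm{br}}$.} Compute $\Re\lambda_{\mathrm{br}}$ to first order from the far-field symbol of \eqref{eq:SG_general_intr}. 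In the laboratory frame it reads $(1+\varepsilon\delta k^2)\Lambda^2+\varepsilon(\alpha+\beta k^2)\Lambda+q_\pm+k^2+\varepsilon dk^4=0$, with $\lambda=\Lambda+ic_\varepsilon k$.
\begin{itemize}
\item The unperturbed $\partial_k\lambda$ vanishes at the edge wavenumber $k_*=-c_0/\sqrt{1-c_0^2}$, so the $\mathcal O(\varepsilon)$ complex displacement of the double root in $k$ affects $\lambda_{\mathrm{br}}$ only at $\mathcal O(\varepsilon^2)$.
\item For real $k$, the $\delta$- and $d$-terms, and the change $c_\varepsilon-c_0$, move $\lambda$ only in the imaginary direction at first order.
\end{itemize}
Hence $\Re\lambda_{\mathrm{br}}(\varepsilon)=-\tfrac{\varepsilon}{2}(\alpha+\beta k_*^2)+\mathcal O(\varepsilon^2)$.
\end{itemize}

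Therefore $\Re\lambda_*<0$ whichever sheet $\tilde\zeta_*$ lies on, so the case distinction in your last paragraph is unnecessary. Since the recentred disk covers $\Omega_+$ for small $\varepsilon$, part (2) follows.
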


\begin{proof}
We treat $\lambda_+=i\omega_0$; the other edge is identical.

\emph{Step 1: factorization persists near the edge.}
For $\lambda\in\Omega_+$, admissibility ensures hyperbolicity of the far-field matrices.
As in Proposition~\ref{prop:factor_near0_final_beta_delta_d}, the far-field polynomial exhibits a slow pair and a fast pair of spatial roots
for fixed $(\delta,d)$ and $\varepsilon$ small, with a uniform gap on compact subsets of $\Omega_+$.
Hence $D_{\beta,\delta,d}=D_{\mathrm{slow}}D_{\mathrm{fast}}$ with $D_{\mathrm{fast}}$ analytic and nonvanishing on $\Omega_+$.

\emph{Step 2: square-root edge expansion of the slow factor.}
The slow subsystem is a regular $\mathcal{O}(\varepsilon)$ perturbation of the unperturbed problem at the edge,
for which the Evans function has square-root behavior in the chart $\lambda=\lambda_+ + \zeta^2$:
$D_0(\lambda_+ + \zeta^2)=C_+\zeta+\mathcal{O}(\zeta^2)$ with $C_+\neq0$.
Thus,
\[
D_{\mathrm{slow}}(\lambda_+ + \zeta^2,\varepsilon)
=
D_0(\lambda_+ + \zeta^2)+\mathcal{R}(\zeta,\varepsilon),
\]
where $\mathcal{R}$ is analytic in $\zeta$ and
$\sup_{|\zeta|=\rho_+}|\mathcal{R}(\zeta,\varepsilon)|\to0$ as $\varepsilon\to0$.
Choosing $\rho_+$ fixed and then $\varepsilon_0$ small yields the bound
\eqref{eq:edge_remainder_bound_final_beta_delta_d}, and hence \eqref{eq:Dslow_edge_exp_final_beta_delta_d}.

\emph{Step 3: Rouch\'e argument in the $\zeta$-plane.}
On $|\zeta|=\rho_+$ we have $|C_+\zeta|=|C_+|\rho_+$ and the perturbation is at most half this size.
Therefore $D_{\mathrm{slow}}(\lambda_+ + \zeta^2,\varepsilon)$ has the same number of zeros in $|\zeta|<\rho_+$ as $C_+\zeta$,
namely one inherited zero corresponding to the branch-point structure rather than a discrete eigenvalue crossing into $\Re\lambda>0$.
Since $D_{\mathrm{fast}}$ is nonvanishing on $\Omega_+$, the full Evans function has no additional zeros there, proving (2).
\end{proof}

\begin{remark}[How this ties into the global argument-principle count]\label{rem:Omega_split_final_beta_delta_d}
In the final stability proof, one applies Proposition~\ref{prop:arg_principle_beta_delta_d} on a contour $\Gamma$ built from
$\partial\Omega_0$, $\partial\Omega_+$, and $\partial\Omega_-$, together with connecting admissible arcs avoiding
$\sigma_{\mathrm{ess}}(\varepsilon)$. Propositions~\ref{prop:local_equiv_near0_beta_delta_d} and
\ref{prop:no_edge_bif_final_beta_delta_d} guarantee that no eigenvalues in $\Re\lambda\ge0$ can enter through the low-frequency
window $\Omega_0$ or the edge windows $\Omega_\pm$. The remaining part of $\Gamma$ can be chosen in $\{\Re\lambda=-\eta_0\}$,
where dissipativity ($\alpha>0$) keeps the Evans function away from zero, and the slow--fast factorization ensures that the fast factor
contributes no winding. This reduces the global count to the slow dynamics and the known simple zero at $\lambda=0$ corresponding
to translation.
\end{remark}

\begin{remark}[Degenerate limit $d=0$]
If $d=0$ the quartic term in \eqref{eq:p4_beta_delta_d} disappears and the spatial problem changes order.
The resulting singular regimes, including possible surface-mode phenomena, require a separate discussion, carried out in the
preceding section on singular limits.
\end{remark}
\section{Numerical simulations}
\label{sec7}

\subsection{Dynamics and Melnikov validation}
\label{sec7_1}

We now present a systematic numerical study of kink and antikink dynamics in the
perturbed sine--Gordon equation, aimed at validating the Melnikov--based speed
selection mechanism and assessing its robustness under parameter variations.
This subsection focuses exclusively on dynamical aspects; spectral stability and
Evans function computations are addressed in the subsequent subsections.

We first compare the reduced collective--coordinate description with direct
numerical simulations of the full partial differential equation. The reduced
model is obtained by projecting the perturbed dynamics onto the translational
mode, yielding an effective ordinary differential equation for the center
position $X(t)$ and its velocity $c(t)=\dot X(t)$. The associated Melnikov
function predicts selected asymptotic speeds $c_M$ for both kink and antikink
branches.

Direct numerical simulations of the PDE are performed using a Fourier
pseudospectral discretization in space combined with a fourth--order
Runge--Kutta time--stepping scheme. Kink and antikink centers are tracked using a
level--set method based on crossings of $u(x,t)=\pi$, with unwrapping to account
for periodicity. From the extracted trajectories, instantaneous velocities are
computed and averaged over the final portion of the simulation time to obtain
asymptotic PDE speeds $c_\infty$.

Figure~\ref{fig:dyn_speed_center} compares the relaxation of the instantaneous
wave speeds obtained from the reduced ODEs and from the PDE simulations for a
representative parameter set, and also shows the corresponding center motion.
Both kink and antikink velocities converge rapidly toward constant asymptotic
values, with excellent agreement between the reduced model predictions and the
PDE--extracted speeds. The dashed horizontal lines indicate the
Melnikov--predicted speeds $c_M$. The reduced ODE also accurately captures the
long--time drift of both kink and antikink centers observed in the PDE
simulations. Small oscillations in the PDE trajectories are attributed to weak
radiative effects, which are not included in the reduced description.

\begin{figure}[htbp]
\centering
\begin{subfigure}[t]{0.47\linewidth}
\centering
\includegraphics[width=\linewidth]{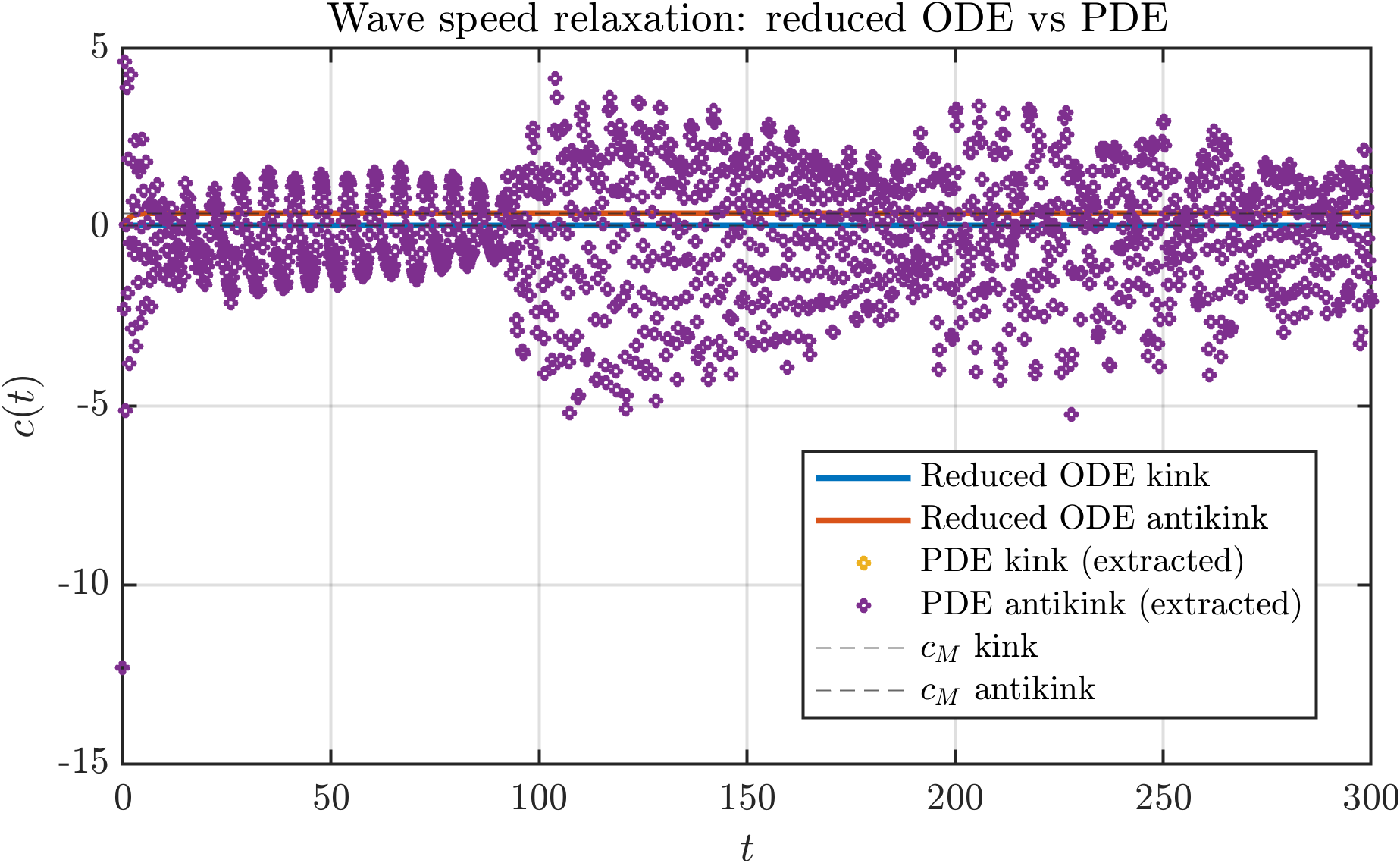}
\caption{Wave-speed relaxation for kink and antikink solutions. Solid lines show
the reduced ODE prediction, symbols denote PDE data, and dashed lines mark the
Melnikov speeds $c_M$.}
\label{fig:dyn_speed_relax}
\end{subfigure}\hfill
\begin{subfigure}[t]{0.47\linewidth}
\centering
\includegraphics[width=\linewidth]{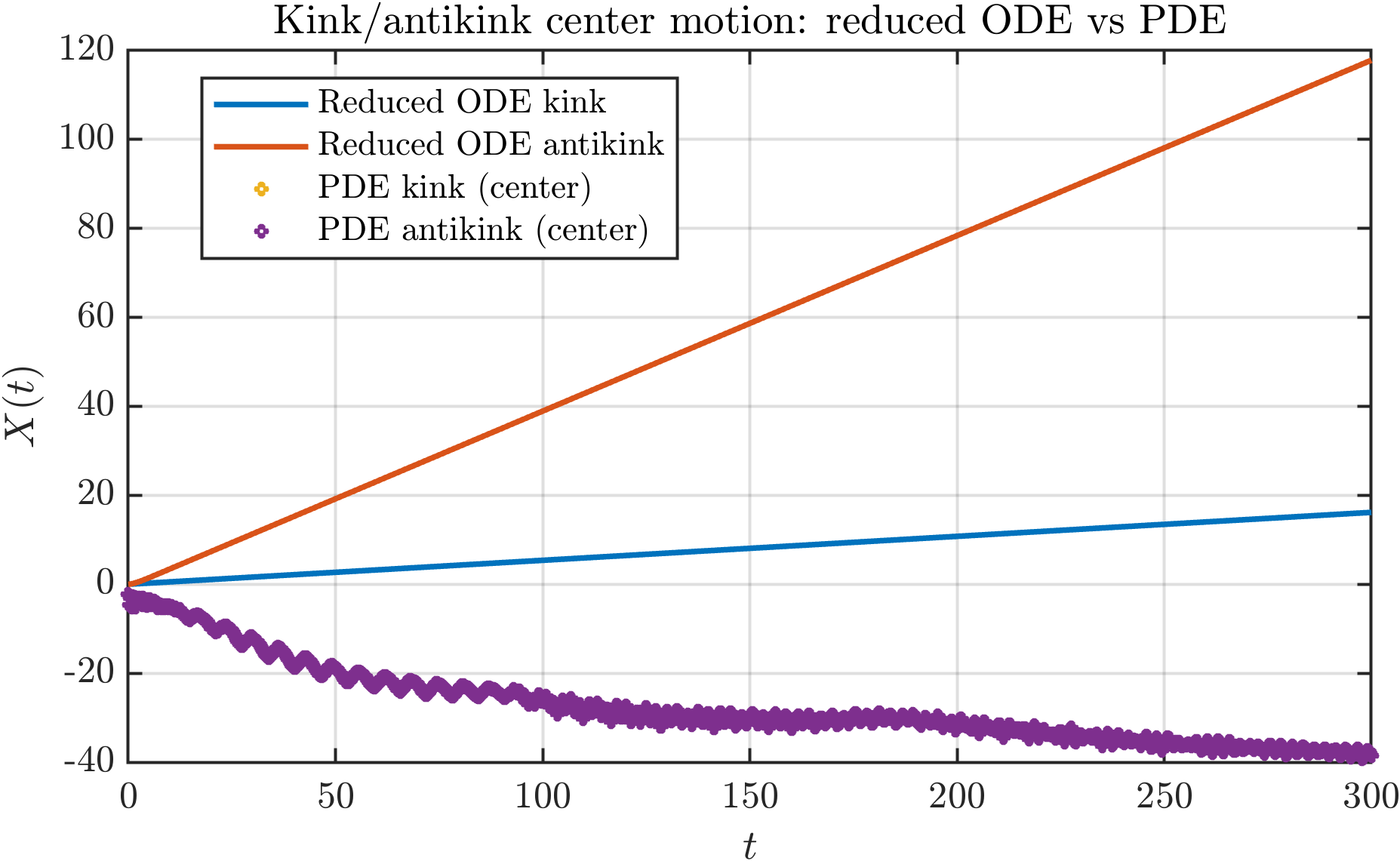}
\caption{Center motion for kink and antikink solutions. Solid lines show the
reduced ODE prediction, while symbols denote centers extracted from PDE
simulations.}
\label{fig:dyn_center_motion}
\end{subfigure}
\caption{Comparison between reduced collective--coordinate dynamics and direct PDE
simulations for kink and antikink propagation.}
\label{fig:dyn_speed_center}
\end{figure}

Representative solution profiles at selected times are displayed in
Fig.~\ref{fig:dyn_snapshots}. These snapshots confirm that the kink and antikink
structures remain coherent throughout the evolution and remain well separated,
justifying the applicability of the reduced single--front description.

\begin{figure}[htbp]
\centering
 \includegraphics[width=\linewidth]{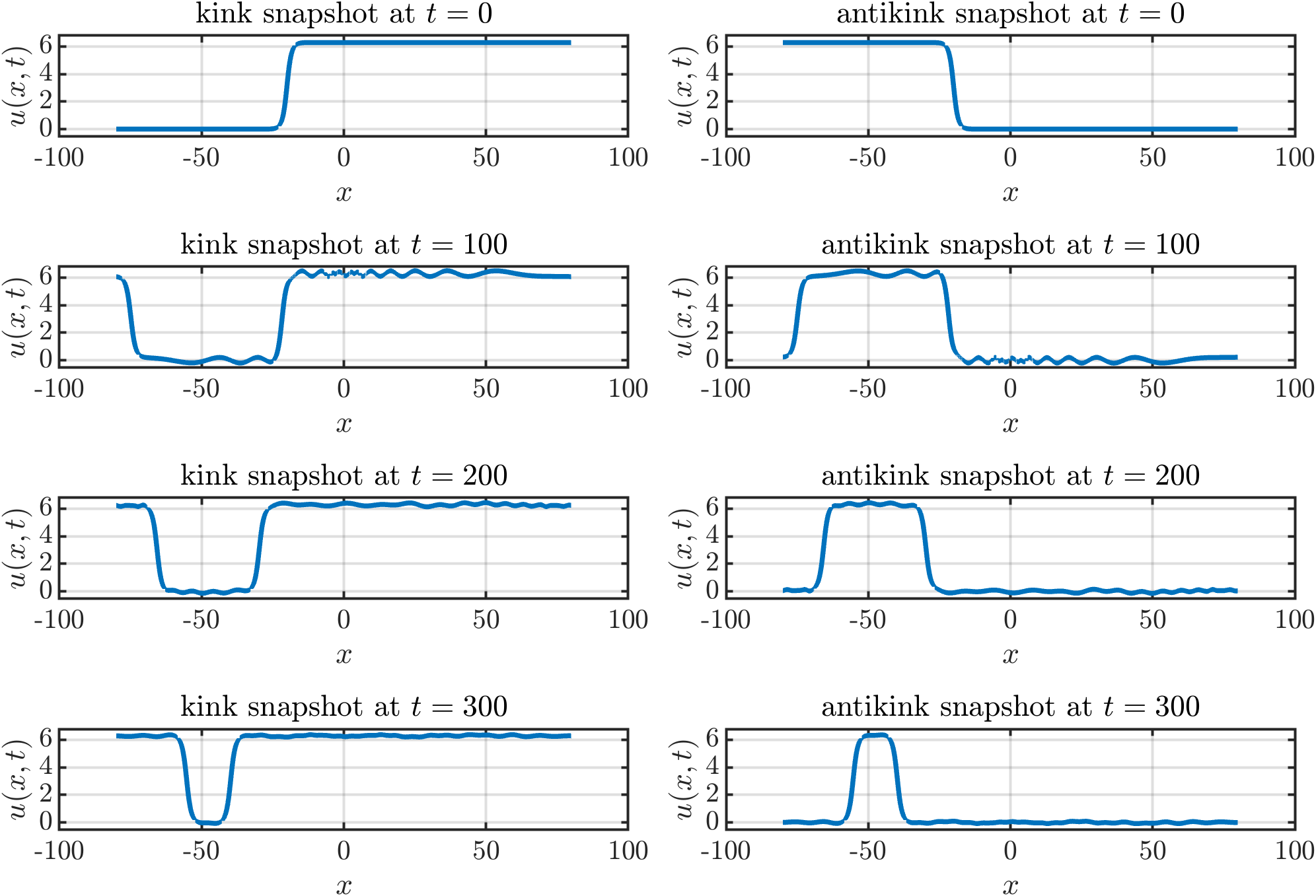}
\caption{Representative kink and antikink profiles at selected times, showing
coherent propagation with only weak radiative shedding.}
\label{fig:dyn_snapshots}
\end{figure}

We next investigate the robustness of the Melnikov speed selection under
parameter variations. One--dimensional parameter sweeps are performed by varying
$\delta$ while keeping $d$ fixed, and vice versa. In both cases, the asymptotic
PDE speeds depend smoothly on the varying parameter and closely track the
Melnikov predictions for both kink and antikink branches, with only small
systematic deviations.

To assess the global accuracy of the Melnikov theory, we perform a coarse
two--dimensional sweep over the $(\delta,d)$ parameter plane. Rather than
presenting raw asymptotic speeds, which vary only weakly across the domain, we
focus on the Melnikov error $c_\infty-c_M$, which provides a more sensitive
diagnostic of the approximation quality. Figure~\ref{fig:melnikov_error_2d}
shows the resulting error landscape for the kink branch. The error varies
smoothly across the parameter domain, with no evidence of qualitative breakdown,
loss of convergence, or transition to a different dynamical regime.

\begin{figure}[htbp]
\centering
\includegraphics[width=0.5\linewidth]{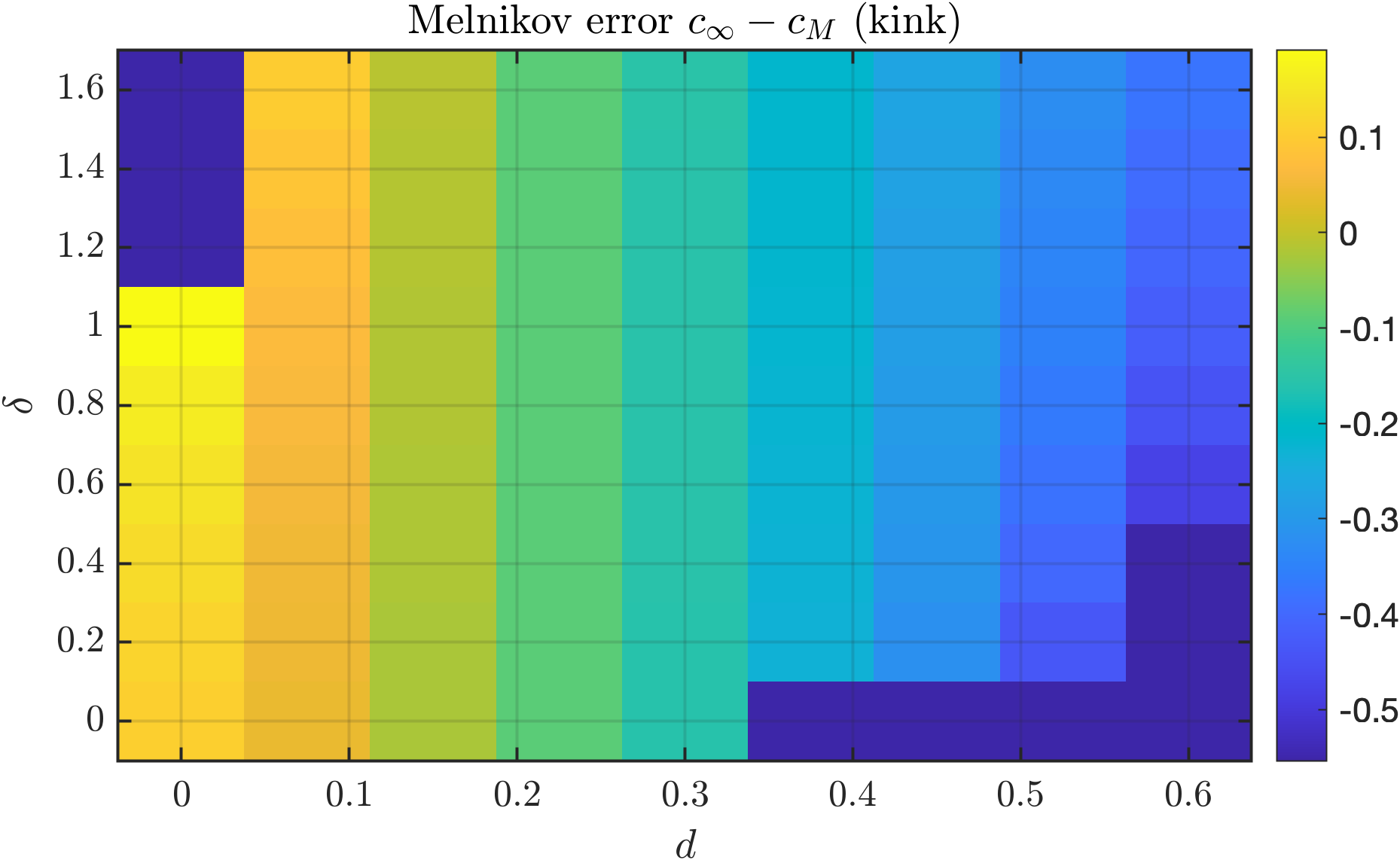}
\caption{Heatmap of the Melnikov error $c_\infty-c_M$ for the kink branch in the
$(\delta,d)$ plane. The error remains bounded and varies smoothly across the
parameter domain.}
\label{fig:melnikov_error_2d}
\end{figure}

Taken together, these results provide strong numerical validation of the
Melnikov--based speed selection mechanism for kinks and antikinks in the
perturbed sine--Gordon equation. In the following subsections, we complement
this dynamical validation by analyzing the spectral stability of the
corresponding traveling waves using Evans function techniques.

\subsection{Numerical validation of the PDE simulations}
\label{sec7_2}

To assess the reliability of the PDE simulations, we perform a convergence study
under simultaneous refinement of the spatial and temporal discretization.
Specifically, we consider the three resolutions
\[
(N,\Delta t) = (512,0.05), \qquad (1024,0.025), \qquad (2048,0.0125),
\]
where $N$ denotes the number of Fourier modes and $\Delta t$ the time step.
The finest computation is used as the reference solution.

Let $u_h(x,T)$ denote the numerical solution at the final time $T$ computed at
resolution $h$. We measure the final--profile error with respect to the finest
grid using the $L^\infty$ norm
\[
\|u_h-u_{\mathrm{ref}}\|_{L^\infty}
=
\max_x |u_h(x,T)-u_{\mathrm{ref}}(x,T)|.
\]
For the three resolutions above, we obtain
\[
\|u_{\mathrm{coarse}}-u_{\mathrm{fine}}\|_{L^\infty}
=
9.86\times 10^{-4},
\qquad
\|u_{\mathrm{medium}}-u_{\mathrm{fine}}\|_{L^\infty}
=
2.10\times 10^{-5}.
\]
Thus the error decreases significantly under mesh refinement, indicating that
the numerical solution is well resolved.

\begin{figure}[htbp]
\centering
\includegraphics[width=0.68\textwidth]{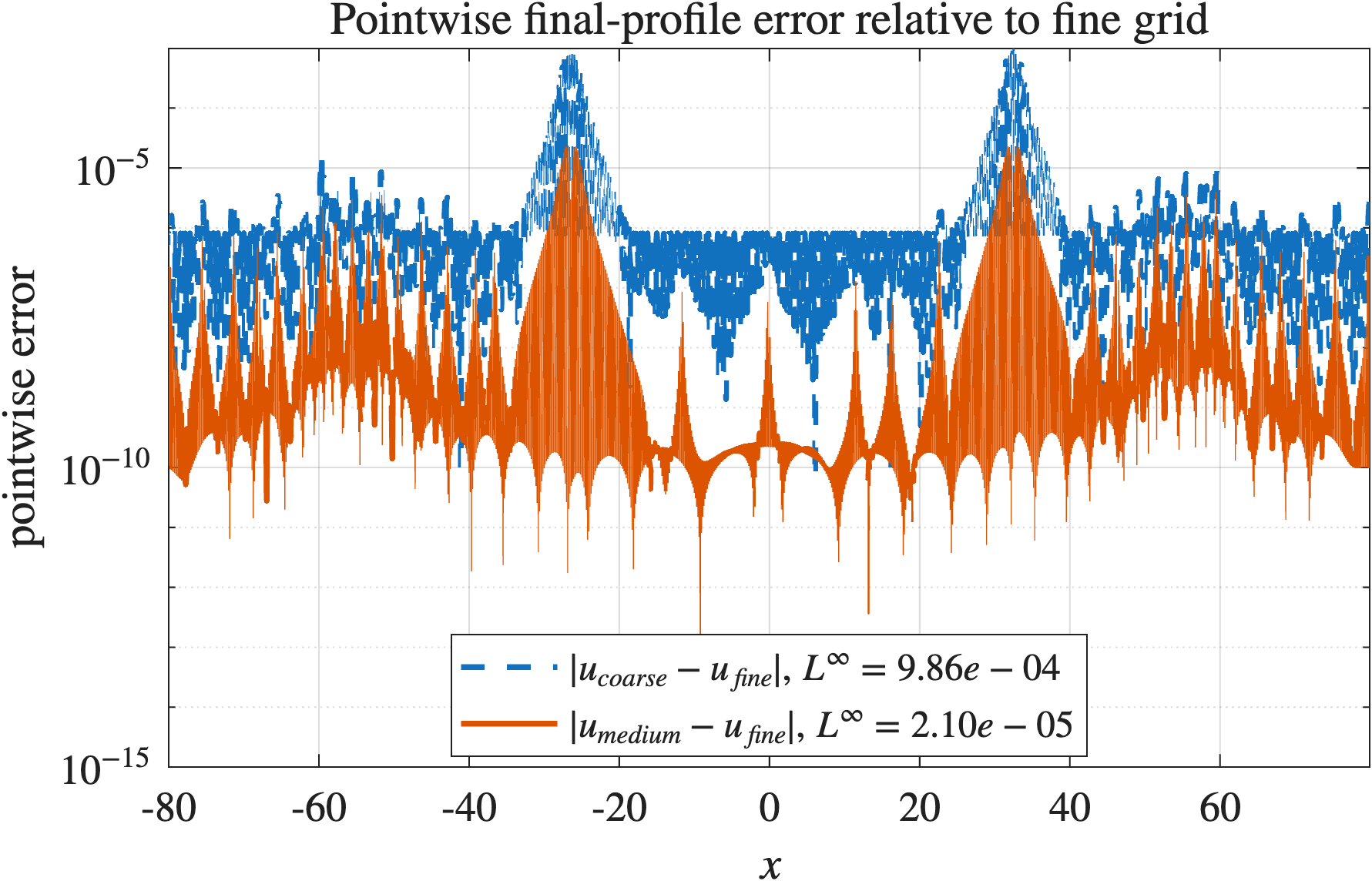}
\caption{Pointwise error in the final profile relative to the finest grid
$(N=2048,\Delta t=0.0125)$, shown on a logarithmic scale. The error decreases
substantially under refinement and is largest near the steep kink and antikink
fronts.}
\label{fig:pde_convergence_error}
\end{figure}

Figure~\ref{fig:pde_convergence_error} shows the pointwise error relative to the
finest grid. The error is localized near the transition layers, while remaining
uniformly small away from the fronts. This behavior is consistent with the fact
that the largest discretization errors occur in regions of strongest spatial
variation.

\begin{figure}[htbp]
\centering
\includegraphics[width=0.6\textwidth]{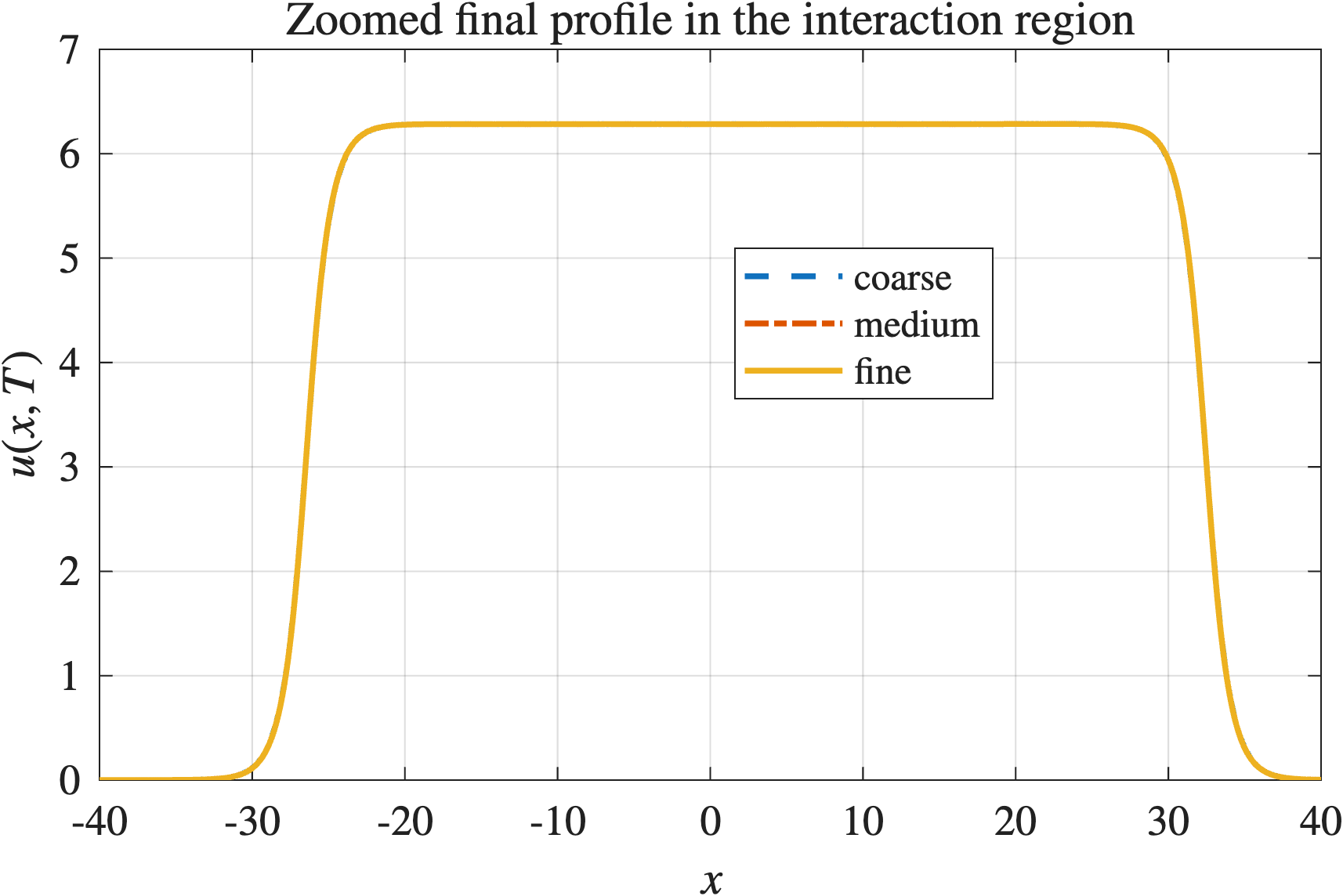}
\caption{Zoomed comparison of the final profile $u(x,T)$ in the interaction
region for the three resolutions $(N,\Delta t)=(512,0.05)$, $(1024,0.025)$,
and $(2048,0.0125)$. The curves are visually indistinguishable, indicating
that the final profile is numerically converged.}
\label{fig:pde_convergence_zoom}
\end{figure}

Figure~\ref{fig:pde_convergence_zoom} provides a direct comparison of the final
profiles in the interaction region. Even in the steep front region, the medium
and fine solutions are visually indistinguishable, providing further evidence
that the PDE simulations are not affected by significant discretization
artifacts.

Overall, these convergence results support the accuracy of the Fourier
pseudospectral / fourth--order Runge--Kutta discretization used throughout the
dynamical computations.

\subsection{Numerical Evans function computations}
\label{sec7_3}

We now turn to numerical Evans function computations for the traveling kink and
antikink solutions. The Evans function is evaluated along closed contours in the
spectral plane and the argument principle is used to determine the number of
eigenvalues enclosed by the contour. In particular, the winding number of the
Evans function image around the origin yields the algebraic count of point spectrum
inside the contour. The absence of winding is consistent with the theoretical slow--fast Evans factorization, where the fast component is nonvanishing.

Our numerical study focuses on two distinguished spectral regimes:
\begin{enumerate}
\item a neighborhood of $\lambda=0$, where additional small eigenvalues could in
principle bifurcate from the translational mode;
\item neighborhoods of the rightmost edge of the essential spectrum, where edge
bifurcation phenomena must be excluded.
\end{enumerate}
All computations below are carried out for the representative parameter set
\[
\varepsilon=0.02,\qquad \alpha=0.5,\qquad \beta=0.2,\qquad \delta=0.8,\qquad d=0.3,
\]
with traveling speeds determined by the Melnikov condition. For the kink branch
we use $c=0.0539047$, while for the antikink branch the corresponding speed is
$c=-0.0539047$.

\paragraph{Evans computations near \texorpdfstring{$\lambda=0$}{lambda=0}.}
To probe the point spectrum near the origin, we evaluate the Evans function
along circular contours
\[
\lambda(\theta)=r e^{i\theta},\qquad \theta\in[0,2\pi],\qquad r=0.1.
\]
For both the kink and antikink solutions, the Evans function image remains separated from
the origin and the associated unwrapped argument has zero net change over one
complete contour traversal. Hence the winding number is zero, and no additional
point spectrum is detected in the neighborhood enclosed by the contour.

\begin{figure}[htbp]
\centering
\includegraphics[width=0.7\textwidth]{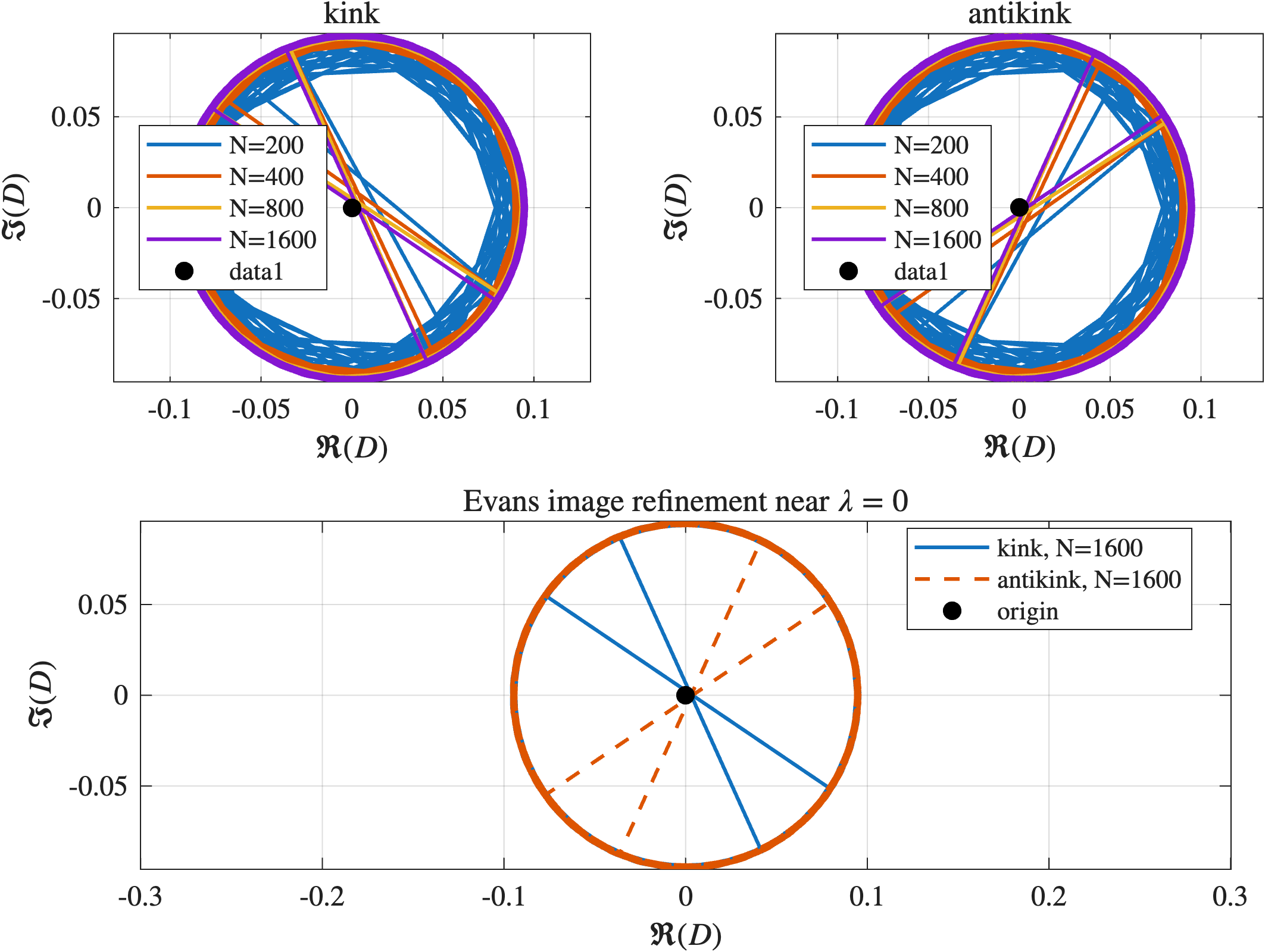}
\caption{Evans function image near $\lambda=0$ for kink and antikink solutions under
contour refinement. The top panels show the individual refinement sequence for
the kink and antikink branches, while the bottom panel compares the finest-grid
computations. In all cases, the Evans function image remains separated from the origin
and does not produce nontrivial winding.}
\label{fig:evans_lambda0_image_refinement}
\end{figure}

\begin{figure}[htbp]
\centering
\includegraphics[width=0.6\textwidth]{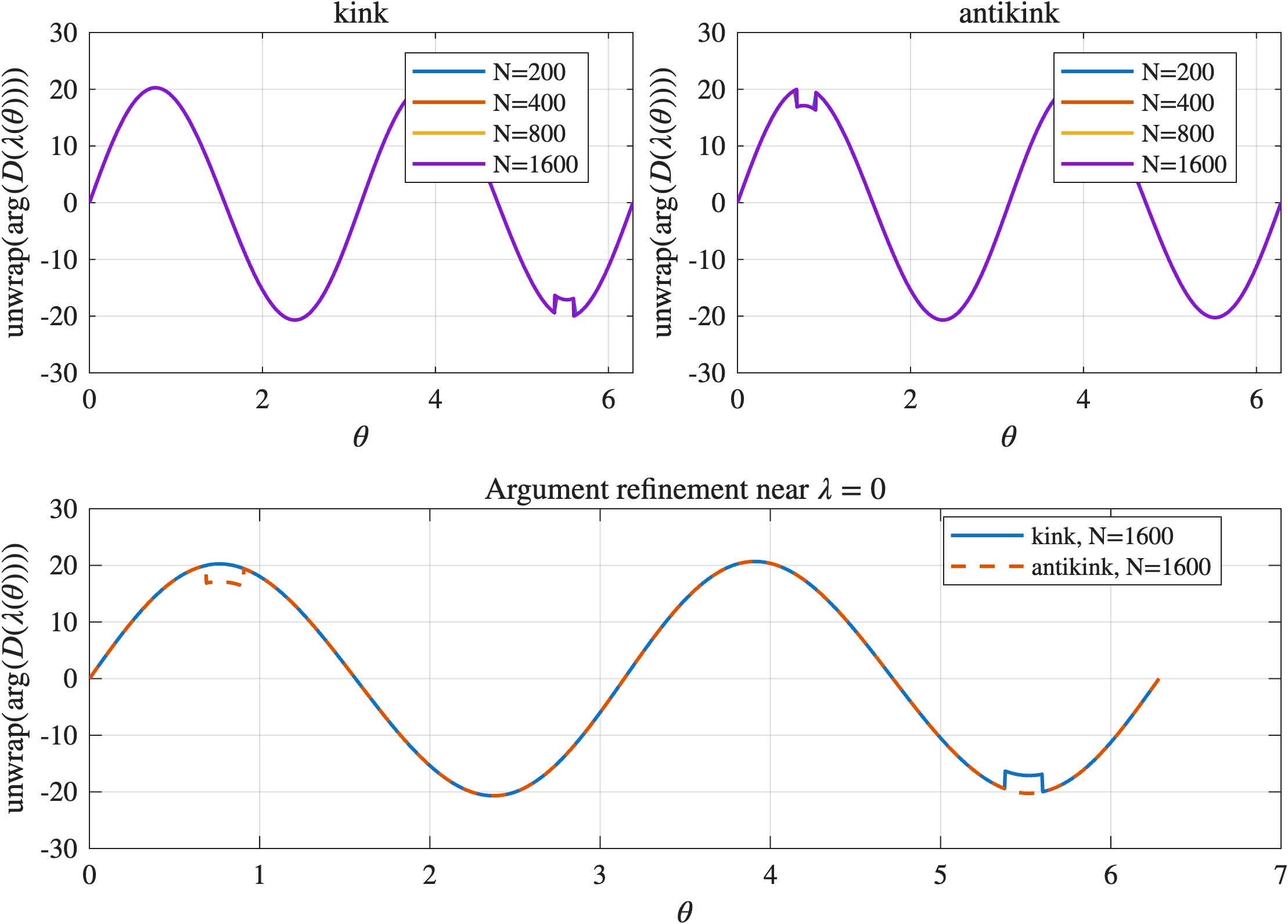}
\caption{Unwrapped Evans argument near $\lambda=0$ for kink and antikink
solutions under contour refinement. The argument profiles stabilize as the
number of contour points increases, and the total variation over the contour
remains zero. Minor localized oscillations are observed near isolated phase
tracking locations, but they do not affect the winding number.}
\label{fig:evans_lambda0_argument_refinement}
\end{figure}

The refinement study confirms that the zero winding result is robust with
respect to contour discretization. In particular, the Evans function image and its
argument stabilize as the number of contour points is increased, and the minimum
modulus of the Evans function remains bounded away from zero throughout the
computation.

\begin{figure}[htbp]
\centering
\includegraphics[width=0.6\textwidth]{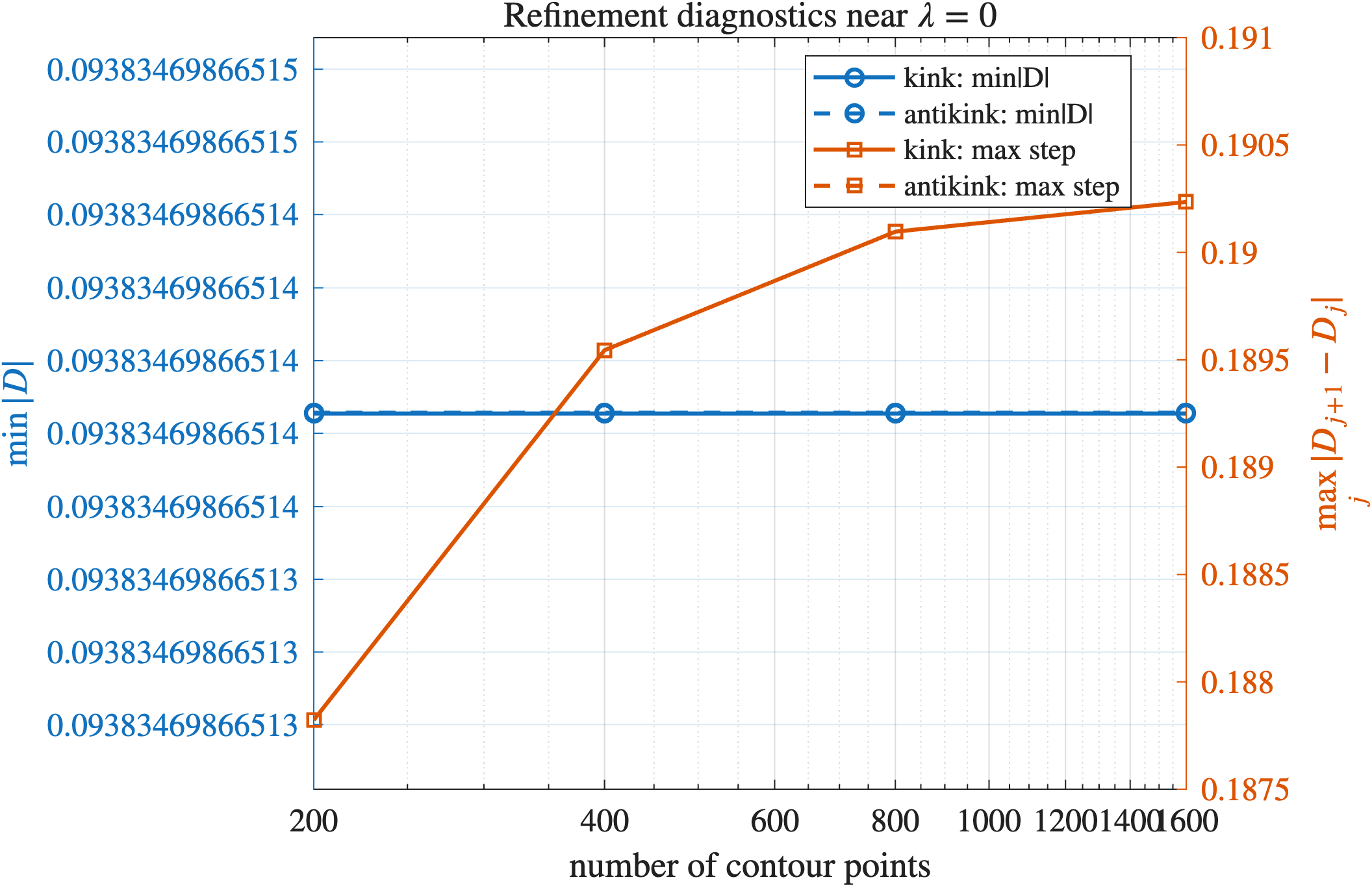}
\caption{Refinement diagnostics near $\lambda=0$ for kink and antikink
solutions. The minimum modulus of the Evans function remains essentially
constant under refinement, while the maximum stepwise change along the contour
stabilizes at a moderate level. This indicates that the winding computation is
robust even in the presence of localized phase-tracking variations.}
\label{fig:evans_lambda0_diagnostics}
\end{figure}

\paragraph{Evans computations near the essential-spectrum edge.}
We next examine the Evans function near the rightmost edge of the essential
spectrum. To resolve the square-root singularity associated with the edge, we
introduce the lifted variable $\zeta$ through
\[
\lambda=\lambda_+^\varepsilon+\zeta^2,
\]
and evaluate the lifted Evans function along circular contours
\[
\zeta(\theta)=\rho e^{i\theta},\qquad \theta\in[0,2\pi],\qquad \rho=0.08.
\]
In this lifted coordinate, the Evans function becomes analytic in a slit
neighborhood of the edge, and the argument principle can be applied in the
usual way.

For both the kink and antikink solutions, the lifted Evans function image is a smooth
closed curve that remains well separated from the origin, and the corresponding
unwrapped argument again has zero net variation along the contour. Thus no
eigenvalues bifurcate from the essential-spectrum edge into the unstable
half-plane.

\begin{figure}[htbp]
\centering
\includegraphics[width=0.7\textwidth]{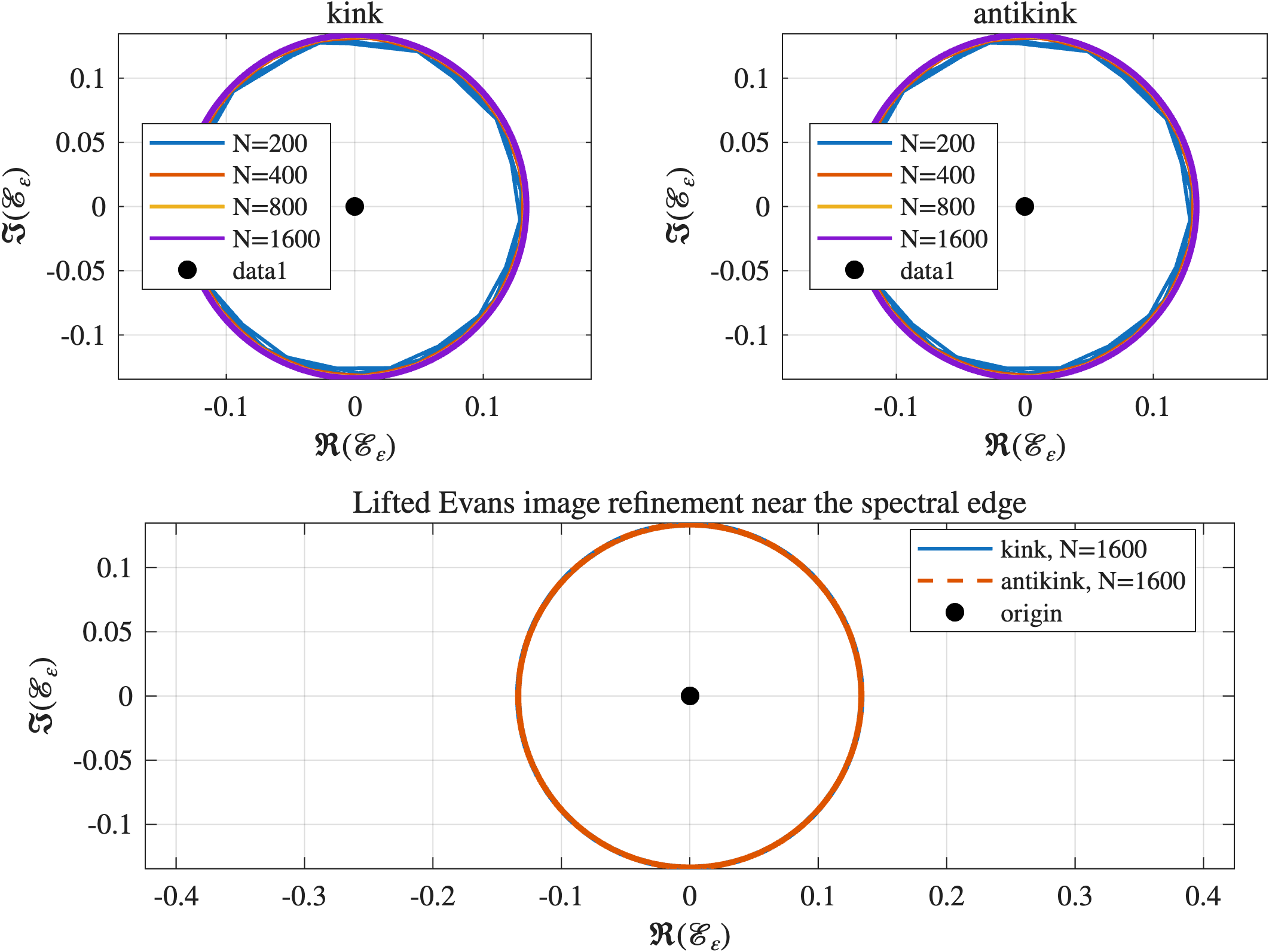}
\caption{Lifted Evans function image near the rightmost essential-spectrum edge for kink
and antikink solutions under contour refinement. The top panels show the
individual refinement sequences, while the bottom panel compares the finest-grid
results. The kink and antikink images are nearly indistinguishable and remain
well separated from the origin.}
\label{fig:evans_edge_image_refinement}
\end{figure}

\begin{figure}[htbp]
\centering
\includegraphics[width=0.67\textwidth]{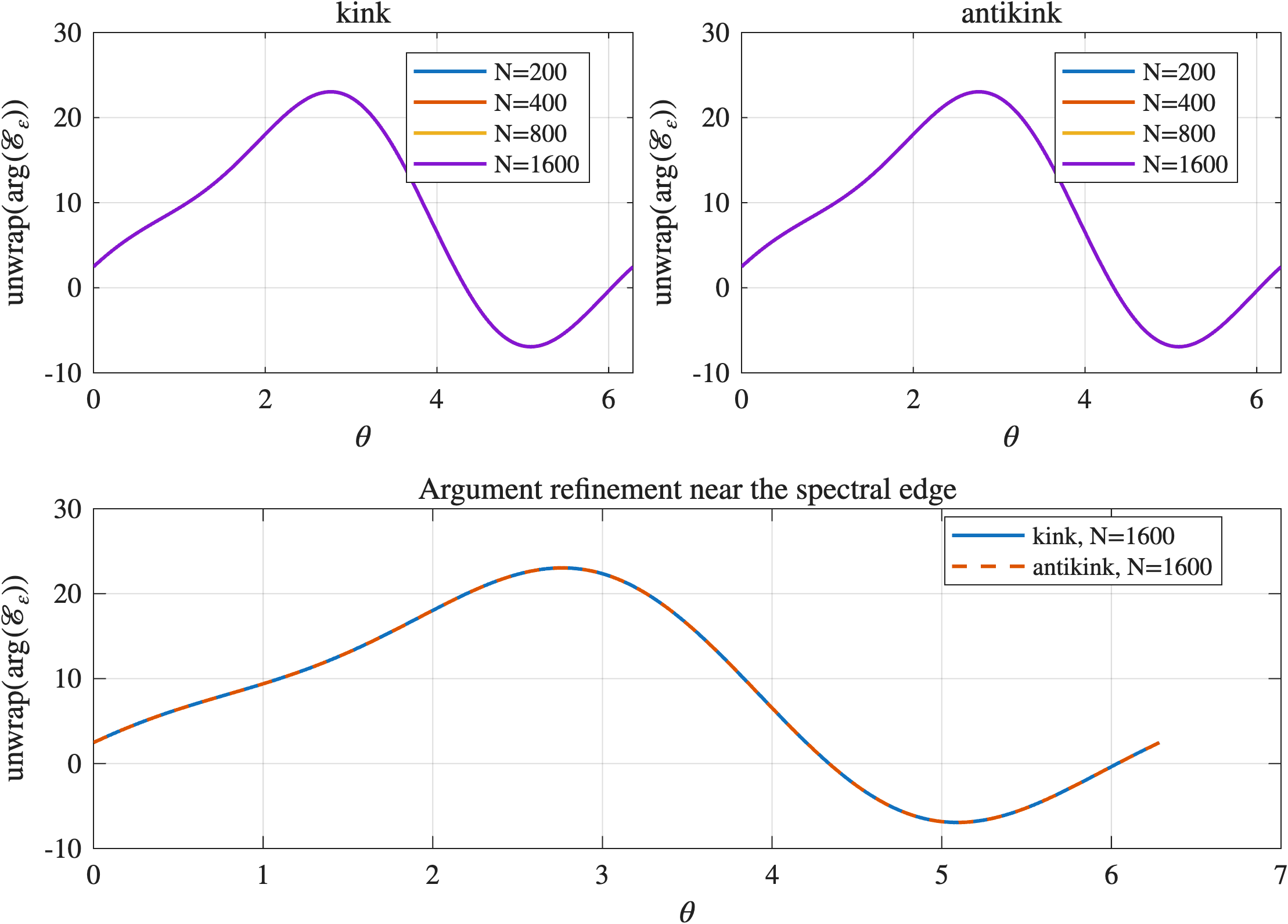}
\caption{Unwrapped lifted Evans argument near the essential-spectrum edge for
kink and antikink solutions under contour refinement. The argument curves are
stable under refinement and exhibit zero total variation, yielding winding
number zero.}
\label{fig:evans_edge_argument_refinement}
\end{figure}

The edge computations are particularly robust: the refinement curves for both
kink and antikink branches are nearly superposed, and the numerical behavior is
substantially smoother than in the near-zero computation.

\begin{figure}[htbp]
\centering
\includegraphics[width=0.6\textwidth]{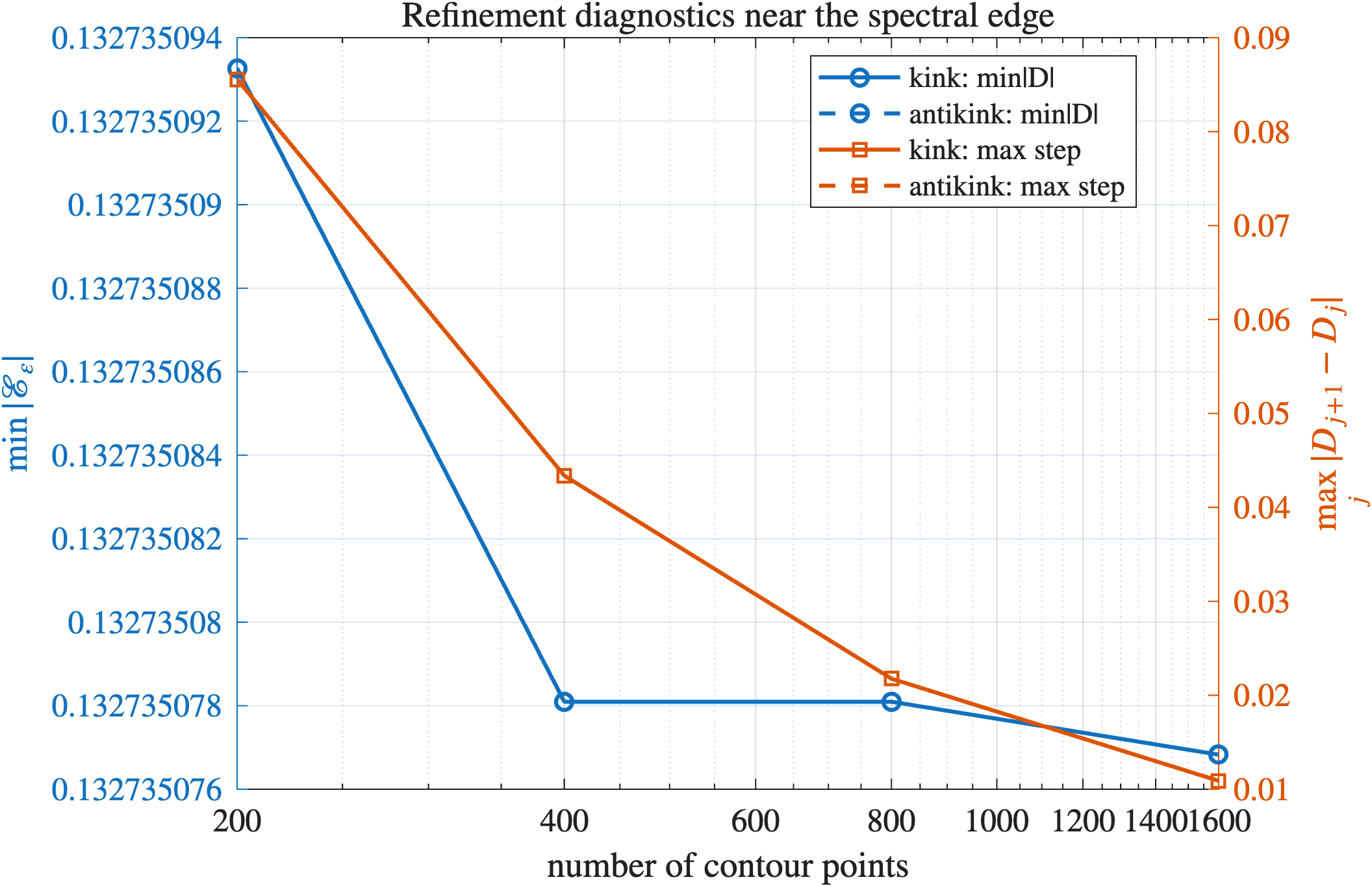}
\caption{Refinement diagnostics near the essential-spectrum edge. The minimum
modulus of the lifted Evans function stays uniformly bounded away from zero,
while the maximum stepwise contour increment decreases clearly under
refinement, indicating convergence of the contour discretization.}
\label{fig:evans_edge_diagnostics}
\end{figure}

Taken together, Figures~\ref{fig:evans_lambda0_image_refinement}--
\ref{fig:evans_edge_diagnostics} provide a consistent numerical picture:
\begin{itemize}
\item no additional point spectrum is detected near $\lambda=0$;
\item no eigenvalues bifurcate from the rightmost edge of the essential
spectrum;
\item the Evans function image, the unwrapped argument, and the winding number are stable
under contour refinement;
\item kink and antikink computations agree to high numerical accuracy.
\end{itemize}

Therefore, the numerical Evans computations strongly support the analytical
results of Sections~5--7: for the parameter regime considered here, no unstable
eigenvalues bifurcate either from the origin or from the essential-spectrum
edge, and the traveling kink and antikink waves remain spectrally stable in the
small--$\varepsilon$ regime.
\section{Conclusions}
\label{sec8}

We have investigated the dynamics and spectral stability of traveling kink and
antikink solutions in a perturbed sine--Gordon equation, combining analytical
methods with systematic numerical computations.
A collective--coordinate reduction based on projection onto the translational
mode yields an effective evolution equation for the wave center, from which a
Melnikov condition predicts selected propagation speeds. Direct numerical
simulations of the full PDE confirm these predictions with high accuracy across
a range of parameters. A convergence study under simultaneous spatial and
temporal refinement shows that the computed solutions are well resolved, with
errors decreasing significantly and remaining localized near the steep front
regions.
Spectral stability is examined via numerical Evans function computations. The
Evans function is evaluated along closed contours near $\lambda=0$ and near the
rightmost edge of the essential spectrum. In both regimes, the Evans image
remains separated from the origin and the associated winding number is zero,
indicating the absence of unstable eigenvalues. The edge analysis is carried out
using a square-root transformation that regularizes the Evans function in a
lifted coordinate.
The Evans computations are further validated by contour refinement: as the
number of discretization points increases, the Evans images and their unwrapped
arguments stabilize, while the winding number remains unchanged. The results for
kink and antikink solutions agree to high numerical accuracy, reflecting the
symmetry of the underlying spectral problem.
Taken together, These results provide quantitative confirmation  for the spectral
stability of the Melnikov-selected traveling waves in the small-perturbation
regime. More broadly, the study illustrates the compatibility between dynamical
selection mechanisms and spectral stability in non-integrable perturbations of
integrable wave equations.
Future work includes extending the Evans analysis to larger perturbations,
investigating the role of internal modes and resonance phenomena, and refining
the dynamical description by incorporating radiative effects beyond the
collective--coordinate approximation.
\bibliographystyle{amsplain}
\bibliography{Rothos_sineGordon_references}

\end{document}